\tikzset{
  graphnode/.style={draw,circle,fill=SkyBlue,draw=black}
}
\theoremstyle{plain}
\newtheorem{theorem}{Theorem}
\newtheorem{lemma}[theorem]{Lemma}
\newtheorem{corollary}[theorem]{Corollary}
\theoremstyle{definition}
\theoremstyle{remark}
\newcommand{\R}{\mathbb{R}}
\newcommand{\C}{\mathbb{C}}
\renewcommand{\i}{\mathrm{i}}
\renewcommand{\H}{\mathcal{H}}
\newcommand{\U}{\mathcal{U}}
\newcommand{\Dens}{\mathsf{Dens}}
\newcommand{\Pot}{\mathsf{Pot}}
\newcommand\orho[1]{\rho_{\overline{#1}}}
\newcommand\oirho[2]{\rho_{\overline{#1},#2}}
\newcommand\orhoRe[1]{\rho_{\overline{#1}}^\mathsf{Re}}
\newcommand\orhoIm[1]{\rho_{\overline{#1}}^\mathsf{Im}}
\newcommand{\vA}{v_\mathrm{A}}
\DeclareMathOperator{\Tr}{Tr}
\DeclareMathOperator{\ch}{ch}
\DeclareMathOperator{\seg}{seg}
\DeclareMathOperator{\rank}{rank}
\DeclareMathOperator{\linearspan}{span}
\renewcommand{\Re}{\mathop\mathsf{Re}}
\renewcommand{\Im}{\mathop\mathsf{Im}}
\newcommand\T[1]{{#1}^\mathsf{T}}
\begin{document}

\title{
    Geometry of Degeneracy in Potential and Density Space
}

\author{Markus Penz}
\address{Basic Research Community for Physics, Innsbruck, Austria}
\email{m.penz@inter.at}

\author{Robert van Leeuwen}
\address{Department of Physics, Nanoscience Center, University of Jyv\"askyl\"a, Finland}

\begin{abstract}
    In a previous work [\href{https://doi.org/10.1063/5.0074249}{J.\ Chem.\ Phys.\ \textbf{155}, 244111 (2021)}], we found counterexamples to the fundamental Hohenberg--Kohn theorem from density-functional theory in finite-lattice systems represented by graphs. Here, we demonstrate that this only occurs at very peculiar and rare densities, those where density sets arising from degenerate ground states, called degeneracy regions, touch each other or the boundary of the whole density domain. Degeneracy regions are shown to generally be in the shape of the convex hull of an algebraic variety, even in the continuum setting.
    The geometry arising between density regions and the potentials that create them is analyzed and explained with examples that, among other shapes, feature the Roman surface.
\end{abstract}
\maketitle

\section{Introduction}
\label{sec:intro}

The mapping from potentials to the corresponding one-particle ground-state density in a many-particle quantum system and the possibility of its inversion forms the basis of density-functional theory (DFT) \cite{vonBarth2004basic,burke2007abc,dreizler2012-book,eschrig2003-book} and many of its variants \cite{ullrich2011-book,ullrich2014brief,Vignale1987,VIGNALE_PRB70_201102,ruggenthaler2014-QEDFT}. This theory is widely employed in electronic-structure calculations, allowing for a good balance between accuracy and computational cost. The problem can also be viewed as a control problem, where a potential that produces a given target density is sought.
Despite its prominent role, the topological structure of the density-potential mapping is poorly studied, even within ground-state DFT. Notable exceptions are a work by \citet{ullrich2002} and a recent paper by \citet{garrigue2021-potential-density}. The former investigated the dimensionality of sets in potential space preserving a certain degeneracy in a finite-lattice system. This indicated that while degeneracy is quite common in the density domain, it remains rare in potential space (yet, the latter statement does not follow in their proof that fails to consider possible linear dependencies of conditions, see Section~\ref{sec:Ullrich-Kohn}).
Such finite-lattice systems are a frequent object of investigation in quantum mechanics, especially in solid-state physics, where their most basic realization is the important Hubbard model \cite{arovas2022hubbard,qin2022hubbard}. In DFT, finite-lattice systems also naturally arise with the choice of a finite basis of localized orbitals and consequently are of prime importance for the whole field \cite{flores2022localorbital}.

The authors of the paper at hand also started their inquiries in this area by focusing on finite-lattice systems, generalized by graphs, and a number of surprising results were already found, including the nullity of the celebrated Hohenberg--Kohn theorem for systems with special symmetry \cite{DFT-graphs}.
The studied examples suggested a far-reaching conjecture on the special geometry of the mapping, about the connection between densities that arise from degenerate ground states, later called \emph{degeneracy regions}, and those for which the Hohenberg--Kohn theorem fails, as well as their corresponding potentials. This conjecture will be proven here (Section~\ref{sec:geom}), but first the shape of such degeneracy regions will be clarified, with the surprising discovery of highly intricate objects from algebraic geometry, e.g., the Roman surface, for which we give a basic classification (Sections~\ref{sec:subspaces-density-map}-\ref{sec:density-regions}). With the same techniques we give precise conditions where non-pure-state $v$-representable densities, a concept from DFT, arise (Section~\ref{sec:non-pure-v-rep}). To establish a link to the full geometry of the mapping, the study of \citet{ullrich2002} is rendered more precise and a new proof based on Rellich's theorem is provided (Section~\ref{sec:Ullrich-Kohn}). The examples are all for small lattice systems and spinless fermions, but it must be stressed that the results from Sections~\ref{sec:subspaces-density-map}-\ref{sec:non-pure-v-rep} also apply to continuum systems and everything can also be extended to incorporate spin by simply including additional internal degrees of freedom. Since this work can be considered a sequel to our previous paper \cite{DFT-graphs}, we recommend looking there for a deepened understanding.

The vector space that includes one-particle densities shall generally be denoted $\Dens$ and similarly $\Pot$ is the vector space of one-body potentials. In a lattice system with $M$ vertices this means that $\Dens = \Pot = \R^M$, while in a continuum setting one can choose $\Dens = L^1(\R^3)\cap L^3(\R^3)$ and $\Pot = L^\infty(\R^3) + L^{3/2}(\R^3)$, its topological dual, like in \citet{Lieb1983}.
We define the density map $\rho:\H \to \Dens, \Psi \mapsto \langle\Psi,\hat\rho\Psi\rangle / \|\Psi\|^2$ (also we use  $\rho:\mathcal{D}\to\Dens, \Gamma\mapsto\Tr(\hat\rho\Gamma)$  on the set of all density matrices for ensemble states $\mathcal{D}$)  that takes fermionic many-particle states in Hilbert space to their respective one-particle density via the density operator $\hat\rho$ \cite{eschrig2003-book}. We study a class of Hamiltonians $H_v = H_0 + V$ that only differ with respect to their real, scalar one-body potential $v$. Here, $H_0$ is the fixed (internal) part that is always assumed to be real (i.e., it does not include a vector potential, or, more generally, allows for time-reversal symmetry), while $V$ is the operator acting on $\H$ that corresponds to the one-body potential $v$. The following lemma will be important for the choice of eigenvectors of such Hamiltonians.

\begin{lemma}\label{lem:real-eigenstate}
If $H_v$ is real symmetric and has a $g$-dimensional eigenspace $\mathcal{U}$ with eigenvalue $E$ then this space is spanned by $g$ \emph{real} orthonormal vectors $\{\Phi_k\}_{k=1}^g$ with \emph{complex} coefficients, $\mathcal{U} = \linearspan_\C \{ \Phi_1,\ldots,\Phi_g \}$.
\end{lemma}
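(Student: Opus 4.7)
The plan is to use that a real symmetric $H_v$ commutes with complex conjugation, so the eigenspace $\mathcal{U}$ carries a natural ``real part''
\[
  \mathcal{U}_\R := \{\Psi \in \mathcal{U} : \bar\Psi = \Psi\}.
\]
The strategy is to show $\dim_\R \mathcal{U}_\R = g$ and then apply real Gram--Schmidt inside $\mathcal{U}_\R$; the resulting vectors will automatically $\C$-span all of $\mathcal{U}$, giving the required $\{\Phi_k\}_{k=1}^g$.

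First I would verify that $\mathcal{U}$ is closed under conjugation: since $H_v$ is Hermitian, its eigenvalue $E$ on $\mathcal{U}$ is real, and the identity $H_v \bar\Psi = \overline{H_v \Psi} = E\bar\Psi$ shows $\bar\Psi \in \mathcal{U}$ whenever $\Psi \in \mathcal{U}$. Consequently $\Re\Psi = (\Psi + \bar\Psi)/2$ and $\Im\Psi = (\Psi - \bar\Psi)/(2\i)$ both lie in $\mathcal{U}_\R$ for every $\Psi \in \mathcal{U}$, yielding the decomposition $\mathcal{U} = \mathcal{U}_\R + \i\,\mathcal{U}_\R$ as complex vector spaces.

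The key linear-algebra observation is that a finite set of real vectors is $\C$-linearly independent if and only if it is $\R$-linearly independent: any complex relation $\sum_k c_k \Phi_k = 0$ with $\Phi_k$ real splits into two real relations for $\Re(c_k)$ and $\Im(c_k)$ separately. Combined with $\mathcal{U} = \mathcal{U}_\R + \i\,\mathcal{U}_\R$, this forces any $\R$-basis of $\mathcal{U}_\R$ to be a $\C$-basis of $\mathcal{U}$, so $\dim_\R \mathcal{U}_\R = \dim_\C \mathcal{U} = g$. Applying real Gram--Schmidt to such a basis produces the desired real orthonormal $\Phi_1, \ldots, \Phi_g$, with every intermediate vector remaining in $\mathcal{U}_\R$ because the inner products and projection coefficients encountered are all real.

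The only substantive step is the dimension identity $\dim_\R \mathcal{U}_\R = g$, which I expect to be the main---if mild---obstacle, resolved entirely by the real/complex independence lemma above. The conjugation-invariance of $\mathcal{U}$ and the real Gram--Schmidt step are standard, and the whole argument carries over unchanged to the continuum when ``real symmetric'' is read as $H_v$ commuting with complex conjugation on its operator domain.
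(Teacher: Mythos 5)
Your proof is correct and takes essentially the same route as the paper: both exploit that a real symmetric $H_v$ with real eigenvalue $E$ maps the real and imaginary parts of eigenvectors back into $\mathcal{U}$, and then extract $g$ real orthonormal vectors by Gram--Schmidt. Your explicit real-versus-complex linear-independence argument merely fills in the dimension count that the paper leaves implicit.
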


\begin{proof}
For the $g$-dimensional eigenspace $\mathcal{U}$ with eigenvalue $E$ choose a general basis $\{\Psi_k\}_{k=1}^g$. Then observe that since $H_v$ and $E$ are real, the real and imaginary part of those vectors are eigenvectors as well, $H_v\Re\Psi_k=E\Re\Psi_k$ and $H_v\Im\Psi_k=E\Im\Psi_k$. This gives a total of $2g$ real eigenvectors that span the whole $\mathcal{U}$ with complex coefficients, from which one can get $g$ real and orthonormal eigenvectors $\{\Phi_k\}_{k=1}^g$ by the Gram--Schmidt process that span the same space.
\end{proof}

In particular, we can always already choose an orthonormal basis of real eigenstates to span the ground-state eigenspace, a property that we will use extensively below.
The following theorem allows us in principle to switch any discussion about ground states over to just densities. This implies an enormous reduction of complexity.

\begin{theorem}\label{th:HK1}
Assume that two Hamiltonians $H_v = H_0 + V,H_{v'} = H_0 + V'$ that differ only in their scalar one-body potentials $v,v'$ share a common ground-state density $\rho$. Then an (ensemble) ground state $\Psi$ $(\Gamma)$ of $H_v$ with density $\rho$ is also an (ensemble) ground state of $H_{v'}$ and vice versa.
\end{theorem}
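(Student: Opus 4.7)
The plan is to apply the Rayleigh--Ritz variational principle twice, exploiting the fact that the expectation value of a one-body operator depends on the state only through its one-particle density. Let $\Psi$ be a ground state of $H_v$ with density $\rho$ and energy $E_v$, and $\Psi'$ a ground state of $H_{v'}$ with the same density $\rho$ and energy $E_{v'}$; both are assumed to exist. Since $V-V'$ is a one-body operator, $\langle\Phi,(V-V')\Phi\rangle$ is a linear functional of $\rho_\Phi$ alone, hence $\langle\Psi,(V-V')\Psi\rangle = \langle\Psi',(V-V')\Psi'\rangle =: \Delta$. This single algebraic identity is doing all of the work in the proof.

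The key step is a pair of trial-state estimates. Using $\Psi'$ as a trial state for $H_v$ and $\Psi$ as a trial state for $H_{v'}$ yields
\[
E_v \leq \langle\Psi', H_v\Psi'\rangle = E_{v'} + \Delta, \qquad E_{v'} \leq \langle\Psi, H_{v'}\Psi\rangle = E_v - \Delta.
\]
Adding these produces $E_v + E_{v'} \leq E_v + E_{v'}$, forcing equality in both inequalities. Because the variational principle is saturated exactly on the ground-state eigenspace of the Hamiltonian in question, $\Psi'$ must then be a ground state of $H_v$ and $\Psi$ a ground state of $H_{v'}$, as claimed.

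For the ensemble case the argument runs verbatim after replacing $\langle\Phi,\cdot\,\Phi\rangle$ by $\Tr(\Gamma\,\cdot)$: the identity $\Tr\bigl(\Gamma(V-V')\bigr)=\Tr\bigl(\Gamma'(V-V')\bigr)$ follows again from $\rho_\Gamma=\rho_{\Gamma'}=\rho$, and the ensemble variational principle is saturated iff the density matrix is supported in the ground-state eigenspace (a convex combination $\sum_i p_i\langle\phi_i,H\phi_i\rangle$ equals the ground-state energy only if every $\phi_i$ with $p_i>0$ is itself a ground state). There is no real obstacle anywhere; the whole content of the theorem is that the potential difference couples to the state only through the density, after which two applications of Rayleigh--Ritz close the loop. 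The only point to watch is the equality case of the variational principle in the mixed-state setting, which is what makes the conclusion work for $\Gamma$ as well as for $\Psi$.
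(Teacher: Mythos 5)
Your proof is correct: the identity $\langle\Psi,(V-V')\Psi\rangle=\langle\Psi',(V-V')\Psi'\rangle$ from the shared density, the two Rayleigh--Ritz estimates whose sum forces equality, and the equality case of the variational principle (for both pure and ensemble states) together give exactly the claimed conclusion. The paper itself defers the proof to the authors' earlier work, but your double-variational argument is the standard one of the two proofs cited there, so there is nothing to add.
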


This result means that to every ground-state density $\rho$ we can assign a ground state $\Psi(\rho)$ (or, more generally, a class of ground states or ensembles with the same density), \emph{irrespective} of the present potential. As such, this is well-known as a part of the Hohenberg--Kohn theorem and is individually called a \emph{weak HK-like result} sometimes \cite{Tellgren2018}. We refer to our previous work \cite{DFT-graphs} and a recent review article \cite{dens-pot-review} for two distinct proofs of the above theorem. Already Theorem~\ref{th:HK1} allows to define a universal energy functional $F_\mathrm{HK}(\rho) = \langle \Psi(\rho),H_0\Psi(\rho) \rangle$ for every ground-state density $\rho$ that gives the lowest possible internal (kinetic plus interactions) energy of a pure state with the prescribed density. It is customary to instead use the much more well-behaved constrained-search functional over ensemble states $F(\rho) = \inf_{\Gamma\mapsto\rho} \Tr(H_0\Gamma)$ \cite{Lieb1983} that is convex and equal to $F_\mathrm{HK}$ on the latter functional's domain. Any ground-state density $\rho$ to a potential $v$ will then minimize the functional $F(\rho)+\langle v,\rho\rangle$ since this gives the total energy $E(v)$. An excellent mathematical review on the universal functionals in DFT is \citet{LewinFunctionals}.

The second part of the Hohenberg--Kohn theorem, which then maps ground states to unique potentials (modulo a constant) and in conjunction establishes the mapping from ground-state densities to potentials, is omitted here. While it holds in the continuum case for a large class of potentials \cite{Garrigue2018}, it generally fails for finite lattice systems \cite{DFT-graphs} that will serve as the prime examples in this work. Where it fails, we find densities that can be represented by multiple potentials that differ in more than an additive constant. Such densities we call \emph{non-uniquely $v$-representable} (``\emph{non-uv}'') and one observes that they always arise when degeneracy regions touch, a feature that is the main element of our geometry theorem in Section~\ref{sec:geom}. Thus we start by investigating the shape of such density sets.

\section{State subspaces under the density map}
\label{sec:subspaces-density-map}

Let $\mathcal{U}$ be a $g$-dimensional subspace of a given Hilbert space $\H$ spanned by \emph{real} vectors $\{\Phi_k\}_{k=1}^g$ with \emph{complex} coefficients. The Hilbert space can be $\H = \C^L$, $L=\binom{ M }{ N }$, in the case of $N$ spinless fermionic particles on an $M$-vertex lattice \cite[Sec.~II.B]{DFT-graphs}, or the usual anti-symmetric many-particle Hilbert space for continuum systems $\H = \Lambda^N L^2(\R^3)$. It is possible to extend to particles with spin by including the internal degrees of freedom into the Hilbert space of the individual particles.
Later, the subspace $\mathcal{U}$ will be the space of $g$-fold degenerate ground states of a Hamiltonian $H_v$ and so we call $g$ the \emph{degree}. We are interested in the image of $\U$ under $\rho$.
Let $x=(x_1,\ldots,x_g) \in\C^g$, $\|x\|=1$, be the coordinates for a normalized state $\Psi\in\mathcal{U}$ with respect to the real, orthonormal subspace basis $\{\Phi_k\}_{k=1}^g$ of $\mathcal{U}$. The density is then evaluated as
\begin{align}\label{eq:rho-from-x}
    \rho(\Psi) = \sum_{k=1}^g |x_k|^2 \rho(\Phi_k) + \sum_{\substack{k,l=1\\k< l}}^g 2\Re ( x_k^*x_l ) \langle \Phi_k,\hat\rho\Phi_l\rangle
\end{align}
and we will later also use the notation $\rho(x)$ for a fixed basis $\{\Phi_k\}_{k=1}^g$. We can also define the density map for ensembles $\Gamma\in\mathcal{D}$ in $\mathcal{U}$, i.e., $\Gamma\H \subseteq \mathcal{U}$, that we call $\rho$ as well. If $P_j$ are the projections on orthonormal pure states $\Psi_j$ from the subspace $\mathcal{U}$ and $\Gamma=\sum_j w_j P_j$ is a density matrix with coefficients $w_j\geq 0$, $\sum_j w_j =1$, then we define accordingly
\begin{equation}\label{eq:rho-from-Gamma}
    \rho (\Gamma)=\sum_j w_j \rho (\Psi_j).
\end{equation}

For the further discussion we define the concept of a \emph{density region} as the set of all densities belonging to states in $\U$. Later this will be a \emph{degeneracy region}, the set of all densities belonging to states in the ground-state eigenspace $\U$ for a given potential $v$. Therein, we differentiate three levels, by first mapping only $\mathcal{U}_\R = \linearspan_\R \{ \Phi_1,\ldots,\Phi_g \}$, i.e., the subspace spanned by the basis vectors with only real expansion coefficients, then the linear span $\mathcal{U} = \linearspan_\C \{ \Phi_1,\ldots,\Phi_g \}$ for general, complex expansion coefficients (which just means the whole subspace), and lastly by forming the full density region as all densities from that subspace including mixed states.
\begin{subequations}
\begin{align}
    &D_\R = \rho(\mathcal{U}_\R) \label{eq:D_R}\\
    &D_\C = \rho(\U) \label{eq:D_C}\\
    &D = \rho(\{ \Gamma \in \mathcal{D} \mid \Gamma\H \subseteq \U \}) \label{eq:D}
\end{align}
\end{subequations}
From the definition we see that the density region does not depend on the choice of basis for $\mathcal{U}$ and that $D_\C$ is limited to pure-state densities, while $D$ also includes non-pure-state densities (see Section~\ref{sec:non-pure-v-rep} where it is shown that in general $D_\C \neq D$).
That $D_\R \subseteq D_\C$ follows by definition and $D$ includes the previous two sets, since $D$ is the set of all convex combinations of pure-state densities as expressed in \eqref{eq:rho-from-Gamma}. We thus find the following sequence of inclusions,
\begin{equation} \label{eq:D-inclusions}
    D_\R \subseteq D_\C \subseteq D.
\end{equation}
As the continuous image of a compact and connected set those sets are all compact and connected.
In order to get $D_\C$ from $D_\R$ we define the \emph{segment set} for an arbitrary set $X$ in a vector space as
\begin{equation}
    \seg X = \{ \lambda x+(1-\lambda) y \mid x,y\in X, 0\leq \lambda \leq 1 \}.
\end{equation}
Hence, instead of taking an arbitrary, finite number of points and form their convex combination, like in the convex hull, we only take two points in the segment set. The segment set is equivalent to the geometric join, as introduced in \citet{BK_2012}, of a set with itself. The construction of $D_\C$ from $D_\R$ is then furnished by the following lemma.

\begin{lemma}
\label{lem:DC=segDR}
    $D_\C = \seg D_\R$.
\end{lemma}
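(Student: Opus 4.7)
The plan is to prove both inclusions by using the decomposition of a complex state into real and imaginary parts.

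First I would treat $D_\C \subseteq \seg D_\R$. Any normalized $\Psi \in \mathcal{U}$ expands as $\Psi = \sum_k x_k \Phi_k$ with $x_k \in \C$; splitting $x_k = a_k + \i b_k$ gives $\Psi = A + \i B$ with $A = \sum_k a_k \Phi_k$ and $B = \sum_k b_k \Phi_k$ both lying in $\mathcal{U}_\R$. Because $\hat\rho$ is self-adjoint and the $\Phi_k$ are real, the matrix elements $\langle \Phi_k,\hat\rho\Phi_l\rangle$ are real and symmetric in $k,l$. Expanding $\langle\Psi,\hat\rho\Psi\rangle$ and $\|\Psi\|^2$ in terms of $A,B$, the imaginary cross terms $\i(\langle A,\hat\rho B\rangle - \langle B,\hat\rho A\rangle)$ and $\i(\langle A,B\rangle-\langle B,A\rangle)$ cancel, leaving
\begin{equation*}
    \rho(\Psi) = \frac{\langle A,\hat\rho A\rangle + \langle B,\hat\rho B\rangle}{\|A\|^2 + \|B\|^2}.
\end{equation*}
Setting $\lambda = \|A\|^2/(\|A\|^2+\|B\|^2)$ and normalizing $A,B$ to unit real vectors $\tilde A,\tilde B \in \mathcal{U}_\R$ (when nonzero) then yields $\rho(\Psi) = \lambda\,\rho(\tilde A) + (1-\lambda)\,\rho(\tilde B) \in \seg D_\R$. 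The edge cases $A=0$ or $B=0$ correspond to $\Psi$ being (up to a global phase) real, hence $\rho(\Psi) \in D_\R \subseteq \seg D_\R$.

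For the reverse inclusion $\seg D_\R \subseteq D_\C$, I would take an arbitrary element $\lambda\,\rho(\phi_1) + (1-\lambda)\,\rho(\phi_2)$ with $\phi_1,\phi_2 \in \mathcal{U}_\R$ unit vectors and $0 \le \lambda \le 1$, and exhibit a complex state realizing it. The natural candidate is $\Psi := \sqrt{\lambda}\,\phi_1 + \i\sqrt{1-\lambda}\,\phi_2 \in \mathcal{U}$. Running the same calculation as above with $A=\sqrt\lambda\,\phi_1$ and $B=\sqrt{1-\lambda}\,\phi_2$ gives $\|\Psi\|^2 = 1$ and $\rho(\Psi) = \lambda\,\rho(\phi_1) + (1-\lambda)\,\rho(\phi_2)$, as required.

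There is no real obstacle here; the only delicacy is making the reality argument precise. It is crucial that $\mathcal{U}_\R$ is a subspace spanned by genuinely real vectors (as assumed in this section, and guaranteed in the eigenspace case by Lemma~\ref{lem:real-eigenstate}), because only then do the matrix elements $\langle\Phi_k,\hat\rho\Phi_l\rangle$ appearing in \eqref{eq:rho-from-x} turn out real, so that the imaginary contributions cancel and the quadratic form reduces to a convex combination of two endpoint densities in $D_\R$.
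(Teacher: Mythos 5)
Your proof is correct and takes essentially the same route as the paper: the paper also splits $\Psi$ into real and imaginary parts, written as $\Psi=\sqrt{\lambda}\Psi_1+\i\sqrt{1-\lambda}\Psi_2$ with $\Psi_1,\Psi_2\in\mathcal{U}_\R$ normalized, and notes that the mixed term cancels by reality, giving $\rho(\Psi)=\lambda\rho(\Psi_1)+(1-\lambda)\rho(\Psi_2)$. You are merely more explicit about the cancellation of the imaginary cross terms, the degenerate cases $A=0$ or $B=0$, and the reverse inclusion $\seg D_\R\subseteq D_\C$, which the paper leaves implicit.
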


\begin{proof}
Take any density in $D_\C$, then it must be the density of a normalized $\Psi \in \mathcal{U}$. Such a vector can always be split up into its real and imaginary part as $\Psi=\sqrt{\lambda}\Psi_1 + \i\sqrt{1-\lambda}\Psi_2$ with $0\leq\lambda\leq 1$ and $\Psi_1,\Psi_2 \in \linearspan_\R \{ \Phi_1,\ldots,\Phi_g \}$, both normalized as well. Now,
\begin{equation}
\begin{aligned}
    \rho(\Psi) &= \langle \Psi,\hat\rho\Psi \rangle = \lambda\langle \Psi_1,\hat\rho\Psi_1 \rangle + (1-\lambda) \langle \Psi_2,\hat\rho\Psi_2 \rangle \\
    &= \lambda \rho(\Psi_1) + (1-\lambda) \rho(\Psi_2),
\end{aligned}
\end{equation}
where the mixed term cancels since $\Psi_1$ and $\Psi_2$ are real. Since $\rho(\Psi_1),\rho(\Psi_2) \in D_\R$ this proves the assertion.
\end{proof}

By \eqref{eq:rho-from-Gamma} the $\rho(\Gamma)$ is a convex combination of pure-state densities and it holds $D=\ch D_\C$, the convex hull, so the following result follows directly from Lemma~\ref{lem:DC=segDR}.

\begin{corollary}\label{cor:D=chDR}
$D=\ch D_\R$.
\end{corollary}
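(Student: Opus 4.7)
The plan is to chain two easy set-identities: first $D=\ch D_\C$, which is essentially already recorded in the paragraph preceding the corollary, and second $\ch D_\C=\ch D_\R$, which follows from Lemma~\ref{lem:DC=segDR} together with the general fact that taking the convex hull of the segment set is the same as taking the convex hull of the set itself.

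First I would make the identity $D=\ch D_\C$ precise. By \eqref{eq:rho-from-Gamma}, every density in $D$ is of the form $\sum_j w_j\rho(\Psi_j)$ with $w_j\ge 0$, $\sum_j w_j=1$, and $\Psi_j\in\U$ a pure state, hence $\rho(\Psi_j)\in D_\C$; this shows $D\subseteq\ch D_\C$. Conversely any finite convex combination $\sum_j w_j\rho(\Psi_j)$ of elements of $D_\C$ is attained by the ensemble $\Gamma=\sum_j w_jP_{\Psi_j}\in\mathcal{D}$ with $\Gamma\H\subseteq\U$ (the $\Psi_j$ can be taken orthonormal by a standard spectral argument, but this is not even needed because the definition \eqref{eq:rho-from-Gamma} extends linearly), giving $\ch D_\C\subseteq D$.

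Second, I would prove the auxiliary fact $\ch(\seg X)=\ch X$ for any subset $X$ of a real vector space. On one hand $X\subseteq \seg X$, so $\ch X\subseteq \ch(\seg X)$. On the other hand every element $\lambda x+(1-\lambda)y$ of $\seg X$ is itself a convex combination of elements of $X$, so $\seg X\subseteq\ch X$, and taking convex hulls yields $\ch(\seg X)\subseteq\ch(\ch X)=\ch X$. Applied with $X=D_\R$ and combined with Lemma~\ref{lem:DC=segDR}, this gives $\ch D_\C=\ch(\seg D_\R)=\ch D_\R$.

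Finally, combining the two identities produces $D=\ch D_\C=\ch D_\R$, which is the claim. There is no real obstacle here: the only thing to be mildly careful about is that the definition \eqref{eq:D} allows genuinely general ensembles (not just finite convex combinations), but the spectral decomposition of $\Gamma$ into projectors $P_j$ with $\Gamma\H\subseteq\U$ automatically lands in a finite-dimensional subspace, so only finitely many non-zero weights appear and the argument reduces to the finite convex hull.
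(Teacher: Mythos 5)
Your proposal is correct and follows essentially the same route as the paper: the paper's proof likewise observes that $D=\ch D_\C$ via \eqref{eq:rho-from-Gamma} and then invokes Lemma~\ref{lem:DC=segDR}, implicitly using $\ch(\seg X)=\ch X$; you have merely spelled out the details the paper leaves to the reader.
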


This greatly simplifies the analysis of a density region $D$, since we can limit ourselves to the study of $D_\R$ and then just form the convex hull.
In order to get $D_\R$, we choose $x\in\R^g$, $\|x\|=1$, and split the map $x\mapsto\rho(x)$ from coordinates to densities in \eqref{eq:rho-from-x} into two stages. First, form the $(g+1)g/2$ second-order monomials $x_k^2$ and $x_kx_l$ that then live in a larger space,
\begin{equation}\begin{aligned}
    \nu : \R^g \quad\quad\quad\;\; &\longrightarrow \R^{(g+1)g/2} \\
    (x_1,\ldots,x_g) &\longmapsto (x_1^2,\ldots,x_g^2,x_1x_2,\ldots,x_{g-1}x_g).
\end{aligned}\end{equation}
This map is known as the Veronese embedding in algebraic geometry \cite{beltrametti-book,Harris_book} and its image of the unit sphere is the Veronese variety $\mathbb{V}_g$.
We immediately see that antipodal points $\pm x$ on the unit sphere are mapped to the same vector, so the map is usually studied within projective space. Great circles on the unit sphere are mapped to a ellipses \cite[Prop.~3.3]{degen1994}.
The second stage consists of \emph{linearly} mapping to densities with a map $P$ formed by the factors $\orho{k}=\rho(\Phi_k)$ and $\orho{kl} = 2\langle \Phi_k,\hat\rho\Phi_l\rangle$ 
as given in \eqref{eq:rho-from-x},
\begin{equation}\label{eq:P-nu-map}\begin{aligned}
    \rho=\,&P \circ \nu : \R^g \;\;\, \longrightarrow \Dens \\
    &(x_1,\ldots,x_g) \longmapsto \sum_{k=1}^g x_k^2 \orho{k} + \sum_{\substack{k,l=1\\k<l}}^g x_k x_l \orho{kl}.
\end{aligned}\end{equation}
The notation with the overlined index has been introduced to avoid confusion with the density $\rho_i$ at a certain lattice point.
This already shows that $D_\R$ is the linear map $P$ applied to $\mathbb{V}_g$ and thus forms a parametrized algebraic variety, while the whole density region $D$ is then just its convex hull. In order to arrive at a possible classification for $D$, we determine its dimension within density space. This is given by the number of linearly independent factors $\orho{k}$ and $\orho{kl}$, that is a total of $(g+1)g/2$ minus the dimension of the kernel (nullity) of $P$ given by $\kappa=\dim\ker P$, and then finally minus 1 from the normalization constraint for densities,
\begin{equation}\label{eq:max-dim-D}
    \dim D = \frac{1}{2}(g+1)g -\kappa -1.
\end{equation}
Hence, the basic classification of density regions will be in $(g,\kappa)$, the \emph{degree} $g \in \{2,3,\ldots\}$ and the \emph{nullity} $\kappa \in \{0,\ldots,(g+1)g/2-1\}$. That $\kappa = (g+1)g/2-1$ is possible in principle, but this implies that every state in $\U$ is mapped to the exactly same density, which seems unlikely in the absence of internal degrees of freedom. This also implies that all $\orho{kl}$ must be zero, since they sum up to zero and cannot be equal to the density. The same two assertions also appear in \citet[Cor.~1.7]{garrigue2021-potential-density} for the conservation of degeneracy in certain directions of potential variation. The density region is then just a single point, a \emph{singleton set}, and no special geometry arises. A larger $\kappa$ cannot occur because then $\rho$ would be the zero map. We can now imagine a mapping $\deg : \rho \mapsto g$ that assigns to every element of $\Dens$ that is the ground-state density for some potential $v$ (``$v$-representable'') the degree of the degeneracy region it belongs to. The ``rich structure'' \cite{garrigue2021KS} of this map will become apparent in the following examples.

\section{Further classification and examples for density regions}
\label{sec:density-regions}

In this section the setting is the same as before, with $\mathcal{U}$ a $g$-dimensional subspace of a Hilbert space $\H$ that corresponds to a lattice or a continuum system. The aim is to study and classify the density regions originating from $\mathcal{U}$.
We begin with the lowest degree $g=2$, so the possible nontrivial values for $\kappa$ are either 0 or 1. Since $x_1^2+x_2^2=1$ on the unit circle, we choose $x_1=\cos\varphi, x_2=\sin\varphi$ and \eqref{eq:P-nu-map} transforms into
\begin{equation}
\begin{aligned}
    &\rho(x) = (\cos\varphi)^2 \orho{1} + (\sin\varphi)^2 \orho{2} + \sin\varphi\,\cos\varphi\,\orho{12} \\
    &= \frac{1+\cos(2\varphi)}{2}\orho{1} + \frac{1-\cos(2\varphi)}{2}\orho{2} + \frac{\sin(2\varphi)}{2}\orho{12} \\
    &= \frac{\orho{1}+\orho{2}}{2} + \frac{\orho{1}-\orho{2}}{2} \cos(2\varphi) + \frac{\orho{12}}{2}\sin(2\varphi).
\end{aligned}
\end{equation}
Here, the doubling of the angle $\varphi$ is a result of the identification of antipodal points.
The formula shows that all densities in $D_\R$ lie on an ellipse with center $\overline{\rho}=\frac{1}{2}(\orho{1}+\orho{2})$ and axes $\frac{1}{2}(\orho{1}-\orho{2})$ and $\frac{1}{2}\orho{12}$. Consequently, $D$ will be the filled ellipse. If $\kappa=1$ then all three factors $\orho{1},\orho{2},\orho{12}$ span the image of $P$ as a 2-dimensional plane in density space, so $\overline{\rho}$, $\frac{1}{2}(\orho{1}-\orho{2})$, $\frac{1}{2}\orho{12}$ and with them the whole elliptical $D$ lie in a plane through the origin. But the normalization constraint for densities subtracts one further dimension, so $D$ actually collapses into a line segment, as required by \eqref{eq:max-dim-D}. If $\kappa=0$ this restriction does not occur and $D$ will really be a filled ellipse. A simple example for this case is the ground-state eigenspace for two non-interacting spinless particles on a lattice with three vertices in the shape of a triangle that is extensively discussed in Sec.~VI.C of \citet{DFT-graphs}. There $D$ forms the incircle of the triangular density domain that itself is the convex hull of the three extremal densities $(1,1,0)$, $(1,0,1)$ and $(0,1,1)$.

If $g=3$ and $\kappa \geq 2$, the maximal dimension of the density region given by \eqref{eq:max-dim-D} is 3, so we still have a possibility to easily visualize it. The set $D_\R$ as the image of $\mathbb{V}_3$ under $P$ with $\kappa=2$ is generally called a \emph{Steiner surface} \cite{apery-book}.
Here we will limit the discussion to one example which yields the most famous such surfaces, the \emph{Roman surface}, while it remains open if other types even appear as density regions. For this example we utilize again a finite-lattice system with two non-interacting spinless particles, now in the form of a tetrahedron graph with equal hopping between all vertex pairs. Since the density is defined on just four vertices, the density space is $\R^4$ while the Pauli exclusion principle and the normalization of the density additionally yield the restrictions $0\leq \rho_i\leq 1$ and $\sum_i\rho_i=2$. This gives a density domain in the shape of an octahedron within the three-dimensional affine hyperplane defined by the normalization constraint. The density region $D_\R$ to be determined here will then be a set within this octahedron. The one-particle Hamiltonian for the given system when derived from a graph Laplacian \cite[Sec.~II.D]{DFT-graphs} is
\begin{equation}
h_0 =
\begin{pmatrix*}[r]
3 & -1 & -1 & -1 \\
-1 & 3 & -1 & -1 \\
-1 & -1 & 3 & -1 \\
- 1 & -1 & -1 & 3 
\end{pmatrix*}
\end{equation}
and for $v=0$ has the ground-state orbital
\begin{equation}
\phi_0 = \tfrac{1}{2} (1,1,1,1)
\end{equation}
as well as the 3-fold degenerate first-excited states
\begin{align}
\phi_1 &= \tfrac{1}{2} (-1,-1,1,1) \\
\phi_2 &= \tfrac{1}{2} (-1,1,-1,1) \\
\phi_3 &= \tfrac{1}{2} (1,-1,-1,1) 
\end{align}
that all give rise to a uniform density. These orbitals yield the two-particle Slater determinants $\Phi_k = \phi_0 \wedge \phi_k$, $k\in \{1,2,3\}$, that span the three-dimensional ground-state subspace $\mathcal{U}$.
We note here that for Hamiltonians derived from such graph Laplacians it generally follows from the Perron--Frobenius theorem that the ground-state orbital is non-degenerate \cite[Sec.~V.A]{DFT-graphs}, so all our examples will be for $N=2$ non-interacting spinless particles, where orbitals from the first-excited state also get filled.
Using the orbital expressions we get
\begin{equation}\label{eq:P-factors-2-particles}
    \oirho{k}{i} = \phi_{0,i}^2 + \phi_{k,i}^2, \quad
    \oirho{kl}{i} = 2\phi_{k,i}\phi_{l,i},
\end{equation}
where $k,l \in \{1,2,3\}$ and $i$ ranges over the lattice sites, $i\in\{1,\ldots,M\}$. Putting in the numbers this gives
\begin{align}
    \orho{k} &= \overline{\rho} = \tfrac{1}{2}(1,1,1,1)\\
    \orho{12} &= \tfrac{1}{2} (1,-1,-1,1)\\ \orho{13} &= \tfrac{1}{2} (-1,1,-1,1) \\ \orho{23} &= \tfrac{1}{2} (-1,-1,1,1).
\end{align}
These vectors now make up the six columns in $P$ and we can check that indeed $\kappa=\dim\ker P = 2$. For $D_\R$ as the image of all $x\in\R^3$, $\|x\|=1$, under $\rho$ we get
\begin{equation}\label{eq:D_R-Roman-surface}
    \rho(x) = \overline{\rho} \underbrace{\sum_{k=1}^3 x_k^2}_{1} + \frac{1}{2}\begin{pmatrix*}[r]
        1 & -1 & -1 & 1 \\
        -1 & 1 & -1 & 1 \\
        -1 & -1 & 1 & 1 \\
        1 & 1 & 1 & 1
    \end{pmatrix*}
    \begin{pmatrix*}
        x_1x_2\\
        x_1x_3\\
        x_2x_3\\
        0
    \end{pmatrix*}.
\end{equation}
The surface defined by the last vector and parametrized by the unit sphere $\|x\|=1$ is an amazing structure of tetrahedronal symmetry that is known as \emph{Roman surface}, while the matrix in front is orthogonal and just describes a mirroring at a plane with normal vector $(1,1,1,-1)$. The situation is displayed in Figure~\ref{fig:Roman-surf}. The full density region $D$ is then the convex hull of $D_\R$ and in this case is equal to the segment set, so we have $D = D_\C = \seg D_\R = \ch D_\R$.
\begin{figure}[ht]
    \centering
    \includegraphics[width=.9\columnwidth]{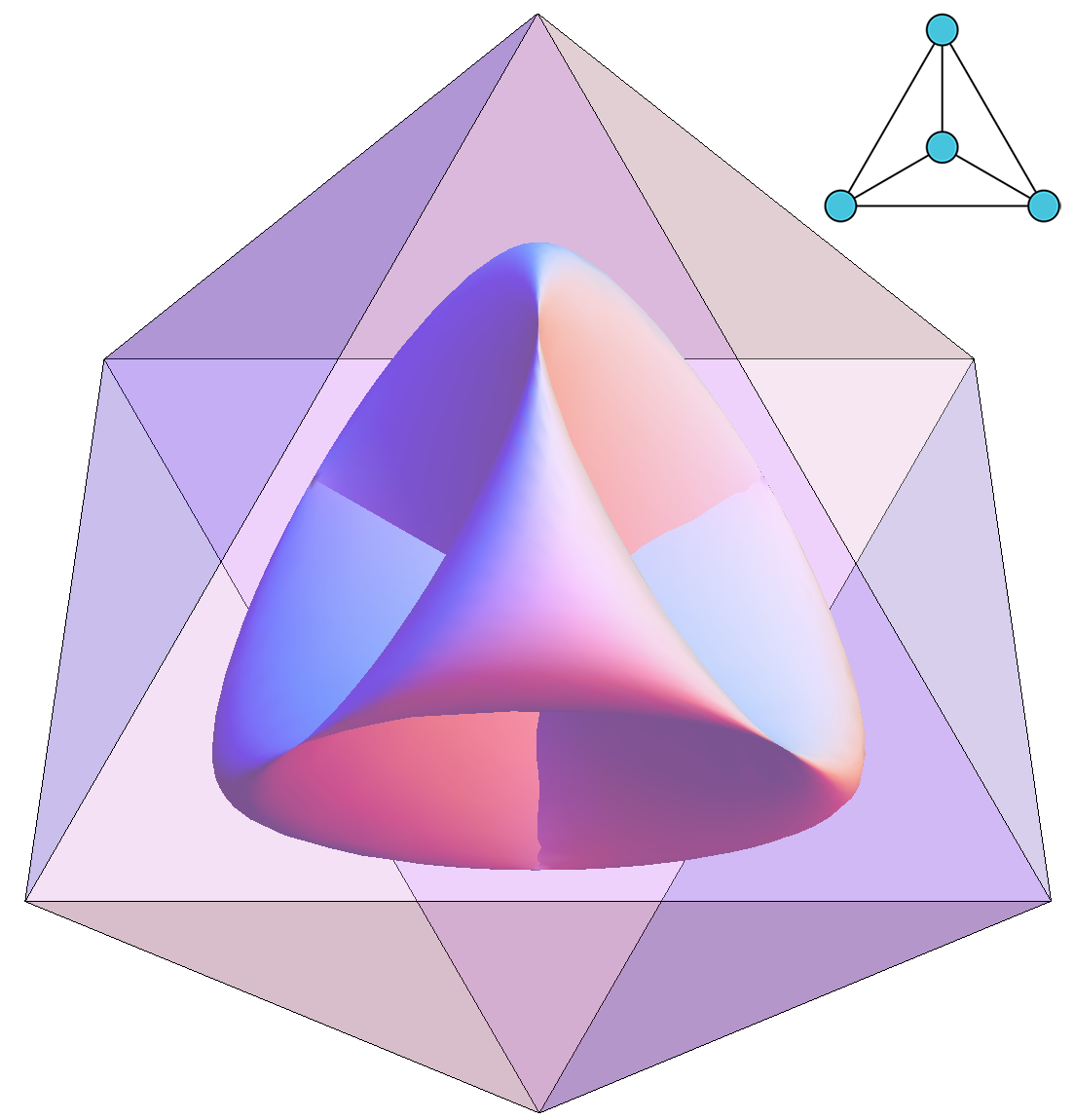}
    \caption{The three-fold degeneracy region $D_\R$ of the tetrahedron graph inside the octahedronal density domain. The corners of the octahedron correspond to the extreme density $(1,1,0,0)$ and its permutations, generalized barycentric coordinates are used to display densities.}
    \label{fig:Roman-surf}
\end{figure}
Since $P$ maps $\R^6\to \R^4$ and $\kappa = \dim\ker P = 2$, an embedding of $\R^4$ into $\R^6$ orthogonal to $\ker P$ allows to see $P$ as a projection with projection center $\ker P$ \cite[1.2.7]{fortuna2016book}. Using the following general lemma, that the projection center does not intersect the Veronese variety, helps to reduce the number of possible classes of Steiner surfaces appearing as density regions next to the Roman surface.

\begin{lemma}\label{lem:VKerPzero}
$\mathbb{V}_g \cap \ker P = \emptyset$.
\end{lemma}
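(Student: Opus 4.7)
The plan is to show that any point in the image of the unit sphere under $\nu$ fails to sit in $\ker P$ because it carries a nontrivial total-particle-number signal. First I would pick an arbitrary $y \in \mathbb{V}_g$, write $y = \nu(x)$ for some $x \in \R^g$ with $\|x\|=1$, and use the factorization $\rho = P\circ\nu$ from (\ref{eq:P-nu-map}) to recognize $P(y) = \rho(x)$ as the one-particle density of the normalized state $\Psi = \sum_{k=1}^g x_k \Phi_k \in \mathcal{U}$. The goal is then to exhibit a linear functional on $\Dens$ that is nonzero on every such $\rho(x)$.

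The natural choice is the total-mass functional $\tau:\Dens\to\R$ defined by $\tau(\rho)=\sum_i \rho_i$ in the lattice case and $\tau(\rho)=\int \rho\,\d^3 r$ in the continuum case. Using that each $\Phi_k$ is a normalized $N$-particle state and that $\{\Phi_k\}$ is orthonormal, I would compute $\tau(\orho{k}) = N$ and $\tau(\orho{kl}) = 2N\langle\Phi_k,\Phi_l\rangle = 0$ for $k\neq l$. Substituting these into the definition of $P\circ\nu$ gives the clean identity $\tau(P(\nu(x))) = N\|x\|^2$, which on the unit sphere evaluates to $N\geq 1$. Hence $P(\nu(x))$ cannot be the zero element of $\Dens$, so $y = \nu(x) \notin \ker P$, and since $y$ was arbitrary we conclude $\mathbb{V}_g\cap\ker P=\emptyset$.

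I do not expect a serious obstacle: the entire content of the lemma is the one-line observation that the linear map $P$ transports the particle-number normalization of wave functions into a nonvanishing linear condition on $\mathbb{V}_g$. The only mild care needed is to verify the vanishing of $\tau(\orho{kl})$ for $k\neq l$, which follows from orthonormality of the chosen real basis of $\mathcal{U}$. The payoff, as the paragraph following the lemma makes clear, is that $P$ can then be viewed as a genuine projection from $\mathbb{V}_g$ onto $D_\R$ with projection center disjoint from the variety, which is what licenses the subsequent classification of Steiner-type density regions.
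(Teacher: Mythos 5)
Your argument is correct and is essentially the paper's own proof: the paper simply observes that $\rho(x)=P\circ\nu(x)$ is a normalized density for every unit vector $x$ and hence cannot vanish, which is exactly the content of your identity $\tau(P(\nu(x)))=N\|x\|^2=N\geq 1$. Your version just makes the normalization explicit via the total-mass functional, which is a fine (and slightly more detailed) way to phrase the same one-line observation.
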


\begin{proof}
This follows directly from the fact that for all $x\in\R^g$ with $\|x\| = 1$, as the parameters of the Veronese variety, $\rho(x)=P\circ \nu(x)$ is a normalized density, so it cannot be zero.
\end{proof}

A subclassification of Steiner surfaces with $(g,\kappa) = (3,2)$ conducted by \citet{degen1994} considers various positions of the projection center and basically looks at the shadows cast by the ellipses that compose the Veronese variety.
By Theorem~4.1 from the reference and our Lemma~\ref{lem:VKerPzero} their class (B) is ruled out.
If an ellipse is projected such that only a line remains, this appears as a singular point in projective space, and counting those points gives the five possible subclasses (Aa)-(Ae). In the case (Aa), three ellipses from the Veronese surface are projected exactly such that they appear as lines, the three lines of self-intersection of the Roman surface. The shapes of class (C) are quadrics, such as ellipsoids, that did not show up in the examples studied here. If Lemma~\ref{lem:VKerPzero} could be strengthened to state that the complex extension of the Veronese variety has empty intersection with the projection center then also class (C) can be excluded.
Since the possible surfaces that one can construct like this correspond to triangular Bézier-surface patches, this classification also has a relevance in computer graphics and finite element methods \cite{sederberg1985steiner}. A similar type of classification directly on the basis of the matrix $P$ that is less geometrical was given by \citet{Coffman-Roman-surf} and \citet[Ch.~1.3]{apery-book}, while the historically first classification was seemingly put forward in the book of \citet[Ch.~XV]{Michel-book} based on pencils of conics.
For some classical literature that presents more interesting properties in particular of the Roman surface, see \citet{clebsch1867}, \citet{cayley1873steiner}, \citet{lacour1898} and \citet[§46]{hilbert-geometry-book}.
Very recently, the possibility of physically realizing the Roman surface by a spin-induced polarization vector in a particular cubic crystal was reported in \citet{liu2022physical-Roman-surf}.
The example of the Roman surface as the degeneracy region for $v=0$ in a tetrahedron graph clearly shows that while an individual ground-state density does not need to reflect the symmetry of the system, the whole degeneracy region does.

After this discovery, we expect even more intricate geometric objects for $g=3,\kappa<2$ or $g>3$. However, it is clear from the fact that a higher-dimensional $\mathcal{U}$ is composed of lower-dimensional subspaces that all $D_\R$ are parametrized families of other $D_\R$ of lower degree, just like the Roman surface can be constructed from a family of ellipses (see Figure~\ref{fig:Roman-surf-Ellipse}). Note that in the case $\kappa=0$ the full Veronese variety is retained as a density region since all factors in \eqref{eq:P-nu-map} are linearly independent. This actually happens in the cuboctrahedron-graph example presented in in Sec.~VI.E of \citet{DFT-graphs} that is also discussed in Section~\ref{sec:non-pure-v-rep} here.
For general $(g,\kappa)$ a further subclassification, like it was conducted by \citet{degen1994} for $(g,\kappa)=(3,2)$ and which was briefly discussed above, seems entirely possible.
\begin{figure}[ht]
\begin{subfigure}{.45\columnwidth}
  \centering
  \includegraphics[width=\columnwidth]{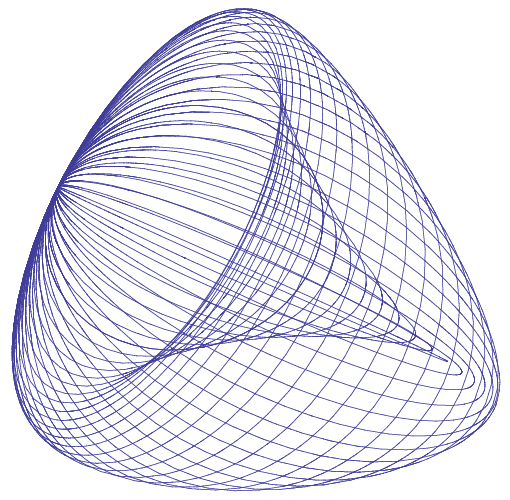}
\end{subfigure}
\hfill 
\begin{subfigure}{.5\columnwidth}
  \centering
  \includegraphics[width=\columnwidth]{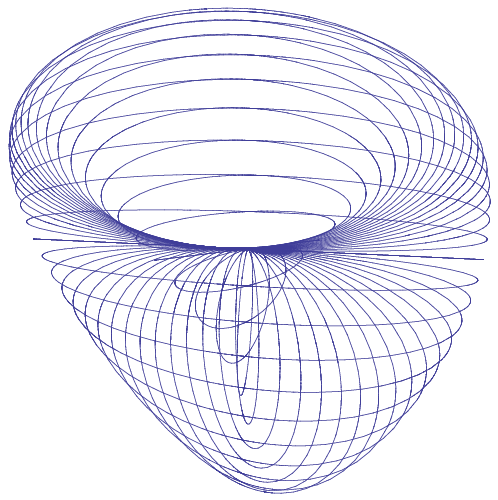}
\end{subfigure}
\hfill 
\begin{subfigure}{1\columnwidth}
  \centering
  \begin{subfigure}{.45\columnwidth}
  \includegraphics[width=\columnwidth]{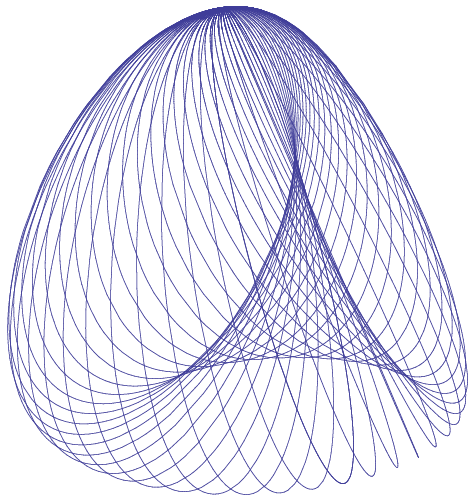}
  \end{subfigure}
\end{subfigure}
\caption{Three different parametrizations for the Roman surface that all show how it can be constructed from ellipses. The first corresponds to introducing polar coordinates for $x\in\R^3$ in \eqref{eq:P-nu-map}, the second to doing the same in \eqref{eq:D_R-Roman-surface}, while the third is from \citet[Sec.~2.4]{apery-book}.}
\label{fig:Roman-surf-Ellipse}
\end{figure}

\section{Non-pure-state $v$-representable densities}
\label{sec:non-pure-v-rep}

As already exemplified for the cases $g=2$ and $g=3$ in the previous section, the $g$-dimensional $\mathcal{U}$ can be thought of as the $g$-fold degenerate eigenspace for a given Hamiltonian with potential $v$. Since the Hamiltonian is assumed real symmetric, Lemma~\ref{lem:real-eigenstate} shows that such an eigenspace can always be spanned by real eigenvectors. Then we write $D(v)$ for the corresponding density region and it is called a \emph{degeneracy region}, the set of all possible densities coming from (ensemble) ground states to the potential $v$. As demonstrated in Section~\ref{sec:subspaces-density-map}, the degeneracy regions can be classified by $(g,\kappa)$. 
We define the \emph{subdifferential} of a convex functional at a certain point as the set of all tangent lines in that point that lie fully below the functional \cite[Def.~2.30]{Barbu-Precupanu}. Likewise the \emph{superdifferential} of a concave functional collects all tangent lines above.
Then, interestingly, it holds that $D(v) = \overline\partial E(v)$, the degeneracy region is the superdifferential of the concave energy functional $E(v) = \inf_\Psi \{\langle \Psi, H_v\Psi \rangle\}$ that itself is the convex conjugate of the universal functional $F(\rho)$. On the other hand, $v\in -\underline\partial F(\rho)$, the potential lies in the subdifferential of the universal functional if $\rho$ is the ground-state density for potential $v$.
\cite{Kvaal2014,laestadius2018generalized}

From the definition of $D_\C(v)$ it follows that this set includes all densities coming from pure ground states (pure-state $v$-representable). It is then known from arguments in \citet{Levy1982} and \citet{Lieb1983} that if $g\geq 3$ there can occur densities that are \emph{not} pure-state $v$-representable, i.e., they are in $D(v)$ but not in $D_\C(v)$, but the provided proofs remained vague (in \cite[Sec.~VI.D]{DFT-graphs} we even find a counterexample to a statement in the proof of \citet[Th.~3.4(i)]{Lieb1983}). The following theorem helps to establish a precise condition when and for which density non-pure-state $v$-representability will occur. More precisely, the condition $\kappa=0$ exactly prevents ``certain linear dependencies'' that are mentioned in \citet{Levy1982} as a situation where pure-state $v$-representability is still possible.

\begin{lemma}\label{lem:central-density-not-D_C}
If $g\geq 3$ and $\kappa=0$ then the central density $\overline{\rho} = (\orho{1}+\ldots+\orho{g})/g$ is \emph{not} in $D_\C$.
\end{lemma}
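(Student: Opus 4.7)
The plan is to argue by contradiction: assume $\overline{\rho}\in D_\C$, so there is a normalized $\Psi=\sum_{k=1}^g x_k\Phi_k\in\mathcal{U}$ with $\rho(\Psi)=\overline{\rho}$. Using the expansion \eqref{eq:rho-from-x} and subtracting $\overline{\rho}=\frac{1}{g}\sum_k\orho{k}$ from both sides, the condition $\rho(\Psi)=\overline{\rho}$ becomes
\begin{equation*}
    \sum_{k=1}^g\left(|x_k|^2-\tfrac{1}{g}\right)\orho{k} + \sum_{k<l} 2\Re(x_k^*x_l)\,\orho{kl} = 0.
\end{equation*}
The normalization $\|x\|=1$ is automatically consistent with the coefficients of $\orho{k}$ summing to zero.

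The next step is to exploit the hypothesis $\kappa=0$. Since the kernel of $P$ is trivial, the $(g+1)g/2$ vectors $\orho{k},\orho{kl}$ are linearly independent in $\Dens$, so every coefficient in the display above must vanish separately:
\begin{equation*}
    |x_k|^2=\tfrac{1}{g}\;\text{ for all }k, \qquad \Re(x_k^*x_l)=0\;\text{ for all }k<l.
\end{equation*}
Writing $x_k=\tfrac{1}{\sqrt{g}}\e^{\i\theta_k}$, the second family of conditions reduces to $\cos(\theta_l-\theta_k)=0$, i.e.\ $\theta_l-\theta_k\in\frac{\pi}{2}+\pi\Z$ for every pair $k<l$.

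The main (and in fact only) obstacle is then a short cocycle argument on the phases, which requires $g\geq 3$. Picking any three indices $k<l<m$ and using additivity of the differences, $(\theta_m-\theta_k)=(\theta_m-\theta_l)+(\theta_l-\theta_k)$, the left-hand side lies in $\frac{\pi}{2}+\pi\Z$ while the right-hand side lies in $\pi+\pi\Z=\pi\Z$; these two cosets of $\pi\Z$ in $\R$ are disjoint, a contradiction. Hence no such $\Psi$ exists and $\overline{\rho}\notin D_\C$. For $g=2$ the obstruction disappears, consistent with the fact that $D_\C=D$ in the ellipse case discussed in Section~\ref{sec:density-regions}.
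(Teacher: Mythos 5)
Your proof is correct, and it takes a genuinely different route from the paper's. The paper first invokes Lemma~\ref{lem:DC=segDR} to write a candidate preimage of $\overline{\rho}$ in $D_\C$ as a convex combination $\lambda\rho(x)+(1-\lambda)\rho(y)$ of two \emph{real}-coordinate densities, applies injectivity of $P$ (i.e.\ $\kappa=0$) to pass to the Veronese coordinates, and then runs a case analysis (vanishing components, then the ratio $\mu=(1-\lambda)/\lambda$) to reach a sign contradiction $x_k^2=-\mu\,y_k^2$. You instead work directly with a single complex coordinate vector $x\in\C^g$, use the same injectivity of $P$ to force $|x_k|^2=1/g$ and $\Re(x_k^*x_l)=0$, and close with a three-term phase (cocycle) argument; equivalently, the conditions say the $g$ nonzero complex numbers $x_k$ are pairwise orthogonal as vectors in $\R^2\isom\C$, which is impossible for $g\geq 3$. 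Both proofs use $\kappa=0$ in exactly the same way; yours avoids the detour through $\seg D_\R$ and the case analysis, and makes the role of $g\geq 3$ (and why $g=2$ escapes, via $x=\tfrac{1}{\sqrt 2}(1,\i)$) completely transparent, while the paper's version stays within its preferred real-coordinate/Veronese framework. One harmless notational slip: with the paper's convention $\orho{kl}=2\langle\Phi_k,\hat\rho\Phi_l\rangle$, the off-diagonal coefficient in your display should be $\Re(x_k^*x_l)$ rather than $2\Re(x_k^*x_l)$; since the conclusion is that it vanishes, nothing changes.
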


\begin{proof}
Because of Lemma~\ref{lem:DC=segDR} in order for $\overline{\rho}$ to be in $D_\C$ it must hold that there is a $\lambda\in[0,1]$ and $x,y\in \R^g$ with $\|x\|=\|y\|=1$ such that
\begin{equation}\label{eq:central-dens-convex}
    (\orho{1}+\ldots+\orho{g})/g = \lambda \rho(x) + (1-\lambda) \rho(y),
\end{equation}
with $\rho(x),\rho(y) \in D_\R$ given by \eqref{eq:P-nu-map}.
Since $\kappa=0$ it holds $\ker P=\{0\}$ and we can apply $P^{-1}$ and formulate the equation above on the codomain of the Veronese embedding as
\begin{align}
    \frac{1}{g} &= \lambda x_k^2 + (1-\lambda) y_k^2 \label{eq:veronese-map-range-1} \\
    0 &= \lambda x_kx_l + (1-\lambda) y_ky_l \quad\quad (k\neq l) \label{eq:veronese-map-range-2}
\end{align}
with $k,l \in \{1,\ldots,g\}$.
Suppose that some $x_k=0$ then $\lambda\neq 1$ and $y_k\neq 0$ follows from \eqref{eq:veronese-map-range-1} and for all $l\neq k$ it holds $y_l=0$ from \eqref{eq:veronese-map-range-2}. Then \eqref{eq:veronese-map-range-1} again says that all $x_l, l\neq k,$ have the same non-zero value and that $\lambda\neq 0$. Now since $g\geq 3$ we can choose $l\neq m$ both different from $k$ and then \eqref{eq:veronese-map-range-2} yields a contradiction. The same holds if some $y_k=0$. Therefore, it will be assumed from here on that all coefficients $x_k,y_k \neq 0$. This implies $\lambda \notin \{0,1\}$ because else \eqref{eq:veronese-map-range-2} cannot be fulfilled. We can thus write
\begin{equation}
    x_k x_l = - \frac{1-\lambda}{\lambda} y_k y_l = - \mu \, y_k y_l \quad \quad (k \neq l),
\end{equation} 
where $\mu =  (1-\lambda)/\lambda >0$. This gives for $k,l,m$ all different,
\begin{equation}
\begin{aligned}
&(x_k x_l) (x_k x_m) = \mu^2 (y_k y_l) (y_k y_m) \\
\Rightarrow\; &x_k^2 (x_l x_m) = \mu^2 y_k^2 (y_l y_m) \\
\Rightarrow\; &-\mu \, x_k^2 (y_l y_m) =  \mu^2 y_k^2 (y_l y_m) \\ \Rightarrow\; &x_k^2 =- \mu \, y_k^2.
\end{aligned}
\end{equation}
However, this is a contradiction since $\mu > 0$ and $x_k,y_k$ were assumed to be non-zero and it follows that $\overline{\rho} \notin D_\C$.
\end{proof}

\begin{corollary}\label{cor:non-pure-state-v-rep}
If there exists a potential $v$ that gives a degeneracy region $D(v)$ with $g\geq 3$ and $\kappa=0$ then the central density $\overline{\rho}$ plus a neighbourhood is non-pure-state $v$-representable.
\end{corollary}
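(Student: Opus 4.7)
The plan is to combine Lemma~\ref{lem:central-density-not-D_C} with two short topological observations. First, I would note that the central density is ensemble $v$-representable: $\overline{\rho} = \rho(\Gamma)$ for the maximally mixed ensemble $\Gamma = \frac{1}{g}\mathbb{I}_{\U}$, which is an admissible density matrix with $\Gamma\H \subseteq \U$, so $\overline{\rho} \in D(v)$ by definition. Second, I would observe that $D_\C(v)$ is compact, because it is the continuous image under $\rho$ of the unit sphere in the finite-dimensional subspace $\U$, which is itself compact. In particular $D_\C(v)$ is closed in $\Dens$.

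By Lemma~\ref{lem:central-density-not-D_C} we have $\overline{\rho} \notin D_\C(v)$, so the Euclidean distance $\delta := \mathrm{dist}(\overline{\rho}, D_\C(v))$ is strictly positive. The open ball $B_\delta(\overline{\rho}) \subset \Dens$ is therefore disjoint from $D_\C(v)$, and every density inside $B_\delta(\overline{\rho}) \cap D(v)$ must belong to $D(v) \setminus D_\C(v)$, i.e., is non-pure-state $v$-representable.

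To ensure that this intersection constitutes a genuine relative neighbourhood of $\overline{\rho}$ in $D(v)$, and not merely the singleton $\{\overline{\rho}\}$, I would show that $\overline{\rho}$ lies in the relative interior of $D(v)$. The key point is that $\Gamma = \frac{1}{g}\mathbb{I}_{\U}$ has all its eigenvalues equal to $\frac{1}{g} > 0$, so it sits in the relative interior of the convex set of trace-one positive operators supported in $\U$. Since the density map $\Gamma \mapsto \Tr(\hat\rho\Gamma)$ is affine linear and maps this convex set onto $D(v)$, it carries relative interior to relative interior, and the claim follows. Thus $B_\delta(\overline{\rho}) \cap D(v)$ contains a full relative neighbourhood of $\overline{\rho}$ within $D(v)$, and this whole neighbourhood is non-pure-state $v$-representable.

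The substantive mathematics is already packaged in Lemma~\ref{lem:central-density-not-D_C}; what remains is a routine topological assembly. The only step that needs a little care is the relative-interior assertion, which one could equally well justify by noting that, under $\kappa = 0$, the affine-linear map $P$ is injective on the affine hull of the Veronese variety, so that the induced map from density matrices on $\U$ to $D(v)$ preserves relative interiors in the standard convex-analytic sense. I do not foresee any genuine obstacle beyond choosing that formulation carefully.
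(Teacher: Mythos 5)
Your topological assembly is sound, and the relative-interior argument via the maximally mixed ensemble $\Gamma=\tfrac{1}{g}\mathbb{I}_{\U}$ is a genuinely useful addition: it makes precise that the ``neighbourhood'' is nontrivial inside $D(v)$, something the paper's own proof leaves implicit by only producing an open set in $\Dens$. However, there is a genuine gap at the step where you assert that any density in $B_\delta(\overline{\rho})\cap D(v)$ ``must belong to $D(v)\setminus D_\C(v)$, i.e., is non-pure-state $v$-representable.'' The set $D_\C(v)$ only collects densities of pure ground states of the \emph{given} potential $v$, whereas ``pure-state $v$-representable'' means being the density of a pure ground state of \emph{some} potential $v'$. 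Excluding $\overline{\rho}$ from $D_\C(v)$ therefore does not by itself rule out that $\overline{\rho}$ arises from a pure ground state of a different Hamiltonian $H_{v'}$; your ``i.e.''\ silently identifies two notions that are not identical by definition.

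The missing bridge is Theorem~\ref{th:HK1}. Since $\overline{\rho}\in D(v)$ is an (ensemble) ground-state density of $H_v$, any pure ground state $\Psi$ of some $H_{v'}$ with density $\overline{\rho}$ would, by that theorem, also be a pure ground state of $H_v$, forcing $\overline{\rho}\in D_\C(v)$ and contradicting Lemma~\ref{lem:central-density-not-D_C}. This is exactly how the paper's proof opens (``a direct consequence of Theorem~\ref{th:HK1}''), and the same argument applies verbatim to every density in your ball intersected with $D(v)$. Once this step is inserted, the remainder of your write-up — compactness and closedness of $D_\C(v)$, positivity of $\delta$, and the relative-interior claim — goes through without difficulty.
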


\begin{proof}
This is a direct consequence of Theorem~\ref{th:HK1}. If $\overline{\rho} \in D(v)$ would be the density of a pure ground state $\Psi$ of any $v'$ then this $\Psi$ must also be the (pure) ground state for $v$ which is in contradiction to the statement of Lemma~\ref{lem:central-density-not-D_C} that $\overline{\rho} \notin D_\C(v)$. Since $D_\C(v)$ was noted to be closed, we can find an open set including $\overline{\rho}$ that has an empty intersection with $D_\C(v)$.
\end{proof}

An explicit example for non-pure-state $v$-representability was given by the authors in Sec.~VI.E of \citet{DFT-graphs} with two non-interacting spinless particles on a cuboctahedron-graph system that exhibits three-fold degeneracy in the ground state. We will briefly show that this example fits to the results from above, i.e., it yields a $P$ with kernel zero. In order to determine the $12\times 6$ matrix $P$ (12 because of the number of vertices in the cuboctahedron and 6 from the dimension of the codomain of the Veronese embedding for $g=3$) we have to calculate the factors $\orho{k}, \orho{kl}$ from the one-particle orbitals $\phi_k$ where the orbital energies are $\epsilon_0 < \epsilon_1=\epsilon_2=\epsilon_3$, all written out in the reference. Like before in the tetrahedron-graph example, these orbitals yield the two-particle Slater determinants $\Phi_k = \phi_0 \wedge \phi_k$, $k\in \{1,2,3\}$, that span the three-dimensional ground-state subspace. Equation~\eqref{eq:P-factors-2-particles} then again determines the vectors that form the columns of $P$ which can be checked to give a matrix with $\ker P=\{0\}$. Similar investigations show that the icosahedron and dodecahedron graphs for two particles described by graph Laplacians have $(g,\kappa)=(3,0)$ at $v=0$ too.

\section{Degeneracy-preserving potential mani\-folds}
\label{sec:Ullrich-Kohn}

When degeneracy occurs for a fixed $H_0$ and a certain potential $v$, variation of this potential can still preserve the symmetry and thus the degree of degeneracy. For finite-lattice systems, the work of \citet{ullrich2002} gives an upper limit to the dimensional size of the manifold in which such potentials with $g$-fold degeneracy are contained.
The vertices of the lattice will be indexed by $i \in \{1, \ldots, M \}$ and thus a density as well as a scalar potential are given as vectors $\rho,v\in\R^M$. Since the density $\rho_i$ at each vertex $i$ must be in $[0,1]$ for fermionic particles and since the whole density is normalized to the number of particles, $\sum_i\rho_i = N$, the density domain has the shape of an $(M,N)$-hypersimplex $\mathcal{P}_{M,N}$.
The dual pairing (inner product) between the potential and the density that yields the potential energy is $\langle v,\rho \rangle = \sum_i v_i\rho_i$. The whole expectation value of the Hamiltonian for a normalized state $\Psi\in\H$ with density $\rho$ is thus $\langle \Psi,(H_0+V)\Psi \rangle = \langle \Psi,H_0\Psi \rangle + \langle v,\rho \rangle$.
We will extend the main result of \citet{ullrich2002} with a proof resting solely on the classification for degeneracy regions introduced in Section~\ref{sec:subspaces-density-map} and an extremely useful result of \citet{rellich1937,rellich-book} (see also \citet{Kato-book} for a more modern treatment) on the analyticity of eigenvalues and eigenvectors under perturbations.

\begin{theorem}\label{th:Ullrich-Kohn}
Let $H_0$ be a real symmetric matrix and $v \in \R^M$ a potential such that $H_v=H_0+V$ has a degeneracy region of class $(g,\kappa)$. Then $v$ (modulo constants) is contained in a potential manifold of at most dimension $M-(g+1)g/2+\kappa$ in which the degree of degeneracy $g$ is preserved. 
\end{theorem}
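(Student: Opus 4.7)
The plan is to identify the tangent space at $v$ to the set of $g$-fold-degeneracy-preserving potentials by first-order perturbation theory, controlled through Rellich's theorem, and then to compute its dimension using the $(g,\kappa)$-classification.

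I would first fix a perturbation direction $\delta v\in\R^M$ and consider the real-analytic family $H_{v(t)}=H_v+t\,\delta V$ along $v(t)=v+t\,\delta v$. By Rellich's theorem, the eigenvalues and a corresponding system of orthonormal eigenvectors of this family can be chosen real-analytic in $t$ in a neighborhood of $0$. For the $g$-fold degeneracy to persist along the curve, all $g$ analytic eigenvalue branches must coincide identically with a single function $E(t)$. At $t=0$ one may take the real basis $\{\Phi_k\}_{k=1}^g$ of $\mathcal{U}$ furnished by Lemma~\ref{lem:real-eigenstate}; standard first-order Rayleigh--Schr\"odinger perturbation theory then forces the symmetric $g\times g$ matrix $M_{kl}(\delta v)=\langle\Phi_k,\delta V\,\Phi_l\rangle$ to be a scalar multiple of the identity, $M(\delta v)=c(\delta v)\,I$. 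Writing $T\subset\R^M$ for the linear subspace of $\delta v$ satisfying this constraint, every analytic tangent direction to the degeneracy-preserving set at $v$ lies in $T$, so the preserving set has dimension at most $\dim T$.

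Next I would translate the matrix condition into linear equations on $\delta v$ via $M_{kk}=\langle\delta v,\orho{k}\rangle$ and $M_{kl}=\tfrac12\langle\delta v,\orho{kl}\rangle$ for $k\neq l$, using the factors from~\eqref{eq:P-nu-map}. The conditions $M(\delta v)=c\,I$ become
\begin{align*}
\langle\delta v,\,\orho{k}-\orho{1}\rangle&=0 \quad (k=2,\dots,g),\\
\langle\delta v,\,\orho{kl}\rangle&=0 \quad (1\le k<l\le g),
\end{align*}
a total of $(g-1)+g(g-1)/2=(g+1)g/2-1$ equations whose rank equals $\dim W$ with $W=\linspan\{\orho{k}-\orho{1},\,\orho{kl}\}$. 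Now $W+\linspan(\orho{1})=\im P$ has dimension $(g+1)g/2-\kappa$ by definition of the nullity. Orthonormality of the $\Phi_k$ gives $\sum_i(\orho{kl})_i=2N\langle\Phi_k,\Phi_l\rangle=0$ for $k\ne l$ and obviously $\sum_i(\orho{k}-\orho{1})_i=0$, so every vector in $W$ has coordinate sum zero, whereas $\sum_i(\orho{1})_i=N\ne 0$; hence $\orho{1}\notin W$ and $\dim W=(g+1)g/2-\kappa-1$. Therefore $\dim T=M-(g+1)g/2+\kappa+1$. Constant potentials automatically satisfy $M(\delta v)\propto I$, because $\sum_i\hat\rho_i=N\cdot I$ on~$\mathcal{U}$, and so lie inside $T$; quotienting out this one-dimensional subspace yields the claimed upper bound $M-(g+1)g/2+\kappa$.

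The main obstacle is the first step: justifying that the linearization faithfully captures the maximal dimension of the preserving set. Rellich's theorem supplies the needed rigidity, since analyticity forces any persistent coincidence of the $g$ eigenvalue branches to manifest already at first order as $M\propto I$; without this, one could worry about higher-order tangent directions in the degeneracy set escaping $T$. Once this linearization step is in hand, the rank count that links $\dim W$ to the nullity $\kappa$ is a routine exercise in linear algebra, and it is precisely the linear dependencies encoded by $\kappa>0$ that were overlooked in~\citet{ullrich2002}.
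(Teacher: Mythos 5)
Your proposal is correct and follows essentially the same route as the paper: Rellich's theorem plus first-order degenerate perturbation theory to reduce degeneracy preservation to the condition $\langle\Phi_k,\delta V\,\Phi_l\rangle\propto\delta_{kl}$, and then a rank count on the matrix $P$ to convert this into the bound $M-(g+1)g/2+\kappa$. The only (equivalent) difference is bookkeeping: the paper gauges the trace part away by setting $E'(0)=0$ to get the $(g+1)g/2$ conditions $\T{P}u=0$ and applies rank--nullity directly, whereas you keep the scalar $c$ free, count $(g+1)g/2-1$ independent conditions via the coordinate-sum argument showing $\orho{1}\notin W$, and quotient by constants at the end.
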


\begin{proof}
Let $H_v$ have a $g$-fold degenerate ground state. Then in the ground-state eigenspace we can choose the real orthonormal eigenstates $\Phi_1, \ldots, \Phi_g$ by Lemma~\ref{lem:real-eigenstate}.
Let $U$ be a perturbation of the potential with one-body potential $u$ such that for the perturbed Hamiltonian $H_v(\lambda)=H_v + \lambda U$ the $g$-fold ground-state degeneracy is preserved. As usual, adding a constant potential to $U$ does not change this situation. According to Rellich's theorem we can choose $g$ orthonormal ground states $\Psi_1 (\lambda),\ldots, \Psi_g (\lambda)$ with common eigenvalue $E(\lambda)$ which are analytic in $\lambda$ for small enough $|\lambda|$. These states are generally not analytically connected to our chosen basis $\{\Phi_k\}_{k=1}^g$ at $\lambda=0$.
However, the states
\begin{equation}
    \tilde{\Phi}_k (\lambda) = \sum_{j=1}^g \Psi_j (\lambda) \langle \Psi_j (\lambda), \Phi_k \rangle 
\end{equation}
are analytic in $\lambda$ as well and do have the property that $\tilde{\Phi}_k (0)=\Phi_k$, which follows by noting that the states $\Psi_j(0)$ form a complete basis set in the ground-state eigenspace of the unperturbed system.
Using the eigenstate property $H_v(\lambda)\tilde\Phi_k(\lambda) = E(\lambda)\tilde\Phi_k(\lambda)$ we differentiate with respect to $\lambda$ and then set $\lambda=0$ to get
\begin{equation}
    H_v \tilde\Phi_k'(0) +  U \Phi_k = E (0) \tilde\Phi_k'(0) + E'(0) \Phi_k.
\end{equation}
We project this onto the subspace spanned by $\Phi_l$ and use $H_v \Phi_l=E(0) \Phi_l$ to get
\begin{equation}
\langle \Phi_l, U \Phi_k \rangle = E'(0) \delta_{kl}.
\end{equation}
Since we have the possibility to add any constant potential to $U$, we can use this gauge freedom to set $E'(0)=0$ and thus arrive at the conditions $\langle \Phi_1, U \Phi_1 \rangle = \ldots = \langle \Phi_g, U \Phi_g \rangle = 0$ in the diagonal and $\langle \Phi_l, U \Phi_k \rangle = 0$ off-diagonal, a total of $g + g(g-1)/2 = (g+1)g/2$ constraints. However, by remembering the definition of $\orho{k}, \orho{kl}$ from Section~\ref{sec:subspaces-density-map}, we see that those constraints are actually equivalent to $\langle u,\orho{k} \rangle = 0$ and $\langle u,\orho{kl} \rangle = 0$ for $k,l\in\{1,\ldots,g\}$, $k<l$, which is nothing but the matrix equation $\T{P} u = 0$. Thus, if $u$ is in the kernel of the transpose of the $(M \times (g+1)g/2)$-matrix $P$, the constraints are fulfilled and this assures that the added potential does not change the degree of degeneracy to first order in $\lambda$ (since it still can change the degree to second order, we only get an upper bound for the dimension of the invariant manifold). To get the dimension of the kernel, we simply use the rank-nullity theorem for $P$ and $\T{P}$ and the fact that the rank of a matrix is equal to the rank of its transpose.
\begin{align}
    &\dim\ker P = \kappa = \frac{1}{2}(g+1)g - \rank P \\
    &\dim\ker \T{P} = M - \rank P = M - \frac{1}{2}(g+1)g + \kappa
\end{align}
These considerations restrict the dimension of the potential manifold (modulo constants) that preserves the $g$-fold degeneracy to at most $M-(g+1)g/2+\kappa$.
\end{proof}

The previous result of \citet{ullrich2002} did not contain a reference to the nullity and implicitly assumed $\kappa=0$, so their dimension is $M - (g+1)g/2$, which is less then the possible maximal dimension of Theorem~\ref{th:Ullrich-Kohn} if $\kappa>0$. Taking the maximal nullity, $\kappa=(g+1)g/2-1$, the maximal dimension for the potential manifold that preserves degeneracy evaluates as $M-1$, which is just the full dimensionality of the potential space (modulo constants) and thus invalidates the argument of \citet{ullrich2002} that degeneracy is rare in potential space. Nevertheless, the potentials that lead to degeneracy are still expected to be of measure zero in the absence of internal degrees of freedom, since they either have to obey certain symmetries or trigger accidental degeneracy, so Theorem~\ref{th:Ullrich-Kohn} is simply not useful for the case of maximal nullity. The result also does not fit to the maximal dimension derived in \citet[Sec.~IV.B]{DFT-graphs} from the \emph{original} result of \citet{ullrich2002} as $M-3$, which lead to the premature conclusion that the set of non-uv potentials has measure zero.

The relevance of Theorem~\ref{th:Ullrich-Kohn} in the present context is that whenever $v$ leads to a degeneracy region $D(v)$ of class $(g,\kappa)$, it can actually be the member of a whole family of degeneracy regions of the same degree with an $(M-(g+1)g/2+\kappa)$-dimensional parametrization. Each degeneracy region within this family can be reached by moving the potential inside the manifold that preserves the degree of degeneracy. Such a family will be called a \emph{degeneracy bundle}. Note that the separate degeneracy regions in such a bundle can lie arbitrarily close together but never touch, a notion that will become important in Section~\ref{sec:geom}.
Since the dimension of a degeneracy region is given by \eqref{eq:max-dim-D}, the largest possible dimension of a whole degeneracy bundle is
\begin{equation}
    M-\frac{1}{2}(g+1)g+\kappa + \frac{1}{2}(g+1)g-\kappa-1 = M-1,
\end{equation}
which is precisely the full dimensionality of the density domain when normalization is taken into account. This led \citet{ullrich2002} to the statement that degeneracy can be considered common in the density domain. Precise numbers for the ratio of degeneracy in the density domain are calculated for a few examples in the concluding Section~\ref{sec:summary}.

To give an example, we will resort again to the tetrahedron graph with two particles from Section~\ref{sec:density-regions}. There we studied the degeneracy set $D(0)$ of class $(g,\kappa)=(3,2)$, the convex hull of the Roman surface $D_\R(0)$, that appears for $v=0$, the potential with maximal symmetry. Here Theorem~\ref{th:Ullrich-Kohn} for $M=4$, $g=3$, $\kappa=2$ leaves no possible dimension for the manifold of invariant degeneracy.
Indeed, by changing the potential, we will necessarily diminish the system of symmetry and thus reduce the degree of degeneracy. However, for a potential of type $v=(s,0,0,0)$ (and all permutations), $s>0$, still a degeneracy of class $(g,\kappa)=(2,0)$ remains, since the vertices apart from the first one still retain a triangular symmetry. (For $s<0$ the two-fold degeneracy is in the first excited state.)
The application of Theorem~\ref{th:Ullrich-Kohn} for $M=4$, $g=2$, $\kappa=0$ now gives a one-dimensional manifold, parametrized by $s>0$, in which the degree of degeneracy is preserved.
In the limit $s\to \infty$ the first vertex is effectively removed from the system and the case of the triangle graph, with the incircle of the triangular density domain $\mathcal{P}_{3,2}$ as degeneracy region, remains. In summary, four degeneracy bundles in the shape of cylinders parametrized by $s>0$ will extend from the flat sides of the convexified Roman surface until they reach the faces of the octahedron $\mathcal{P}_{4,2}$. The whole situation is displayed in Figure~\ref{fig:Roman-surf-with-cylinders}. The collection of all degeneracy regions and bundles will be called the \emph{degeneracy structure}. In the next section we investigate how this geometrically relates to non-uv densities and the corresponding potentials.
\begin{figure}[ht]
    \centering
    \includegraphics[width=.9\columnwidth]{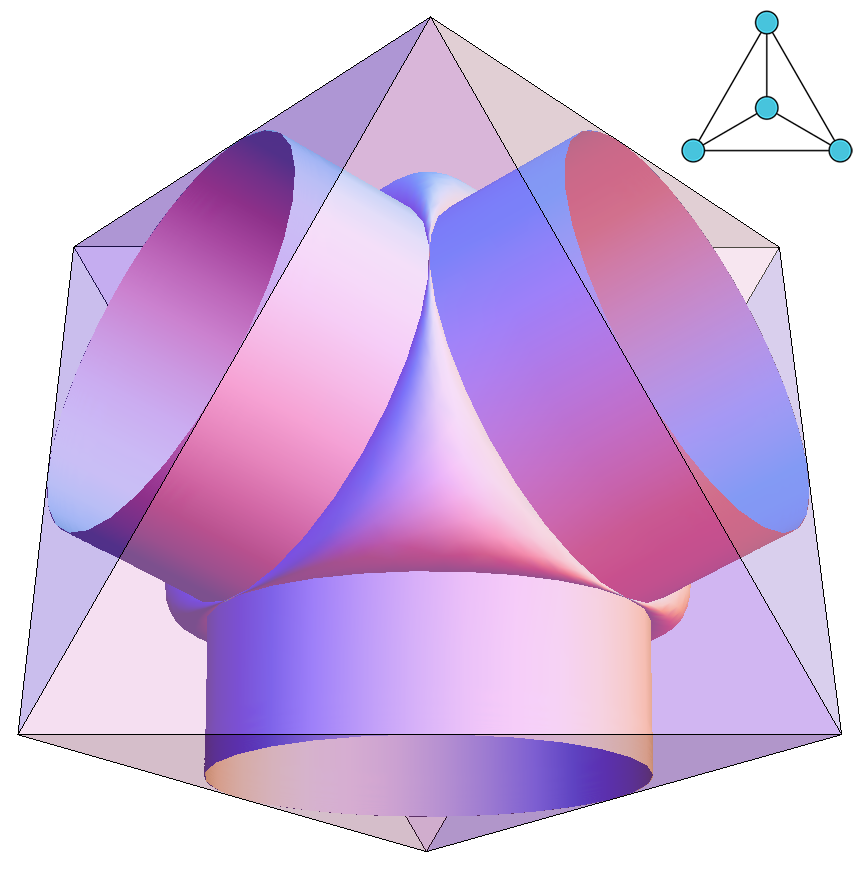}
    \caption{The union of all degeneracy regions (\emph{degeneracy structure}) of the tetrahedron graph inside the octahedral set of densities $\mathcal{P}_{4,2}$. The four degeneracy bundles in the shape of cylinders reaching out from the flat sides of the convexified Roman surface have to be imagined as filled.}
    \label{fig:Roman-surf-with-cylinders}
\end{figure}

\section{Geometrization of the potential-density mapping}
\label{sec:geom}

Contrary to the continuum case of standard DFT, where the Hohenberg--Kohn theorem implies that every ground-state density comes from a unique potential, at least if those are limited to a certain class \cite{Garrigue2018}, the lattice case introduced in Section~\ref{sec:Ullrich-Kohn} allows for non-uniquely $v$-representable (``non-uv'') densities. We proved their existence together with some examples and basic properties in our previous work \cite{DFT-graphs}, where we also put forward the idea of a full geometrization of the mapping from potentials to densities with degeneracy regions and non-uv densities and their corresponding potentials as the prime elements. We wrongly conjectured that degeneracy regions have a spherical shape, while actually their shape is much more intriguing, as was demonstrated in Section~\ref{sec:density-regions}. With these first results in hand, we will here finalize and prove the geometry conjecture.


\begin{theorem}[Density-Potential Geometry]
\label{th:geometry}
The occurrence of non-uv densities in the given setting is limited to two situations:\begin{enumerate}[(a)]
\item If two different degeneracy regions $D(v_\mathrm{I}),D(v_\mathrm{II})$ intersect (in a single density or forming a degeneracy region of strictly lower degree) then $\rho\in D(v_\mathrm{I})\cap D(v_\mathrm{II})$ is non-uv. All potentials on the line segment that connects $v_\mathrm{I}$ and $v_\mathrm{II}$ yield the same ground-state density $\rho$.
\item If a degeneracy region $D(\vA)$ touches the boundary of the density domain $\mathcal{P}_{M,N}$ then this boundary density $\rho$ is non-uv. All potentials from a ray with apex $\vA$ yield the same ground-state density $\rho$. Furthermore, such densities are the \emph{only} $v$-representable densities on the boundary of $\mathcal{P}_{M,N}$, other boundary points are \emph{not} $v$-representable.
\end{enumerate}
\end{theorem}

\begin{proof}
We begin with a general observation concerning non-uv densities.
If a density $\rho$ arises from two different potentials $v,v'$ then by Theorem~\ref{th:HK1} both Hamiltonians $H_v,H_{v'}$ share a ground-state density matrix $\Gamma\mapsto\rho$. It is then clear that this $\Gamma$ is also an eigenstate of $H_{v_\lambda}$ with $v_\lambda=\lambda v+(1-\lambda) v'$ for any $\lambda \in \R$.
But since $\rho$ is the ground-state density for both potentials $v,v'$, it must also minimize both convex functionals $F(\rho)+\langle v,\rho \rangle$, $F(\rho)+\langle v',\rho \rangle$ by the definition of a ground state. It is then also a minimizer for $\lambda(F(\rho)+\langle v,\rho \rangle)$, $(1-\lambda)(F(\rho)+\langle v',\rho \rangle)$ ($\lambda \in [0,1]$) and their sum $F(\rho)+\langle \lambda v+(1-\lambda) v',\rho \rangle$. This shows that $\rho$ is the \emph{ground-state} density for \emph{all} potentials on the line segment between $v$ and $v'$.
Since the eigenvalues of $H_{v_\lambda}$ are all continuous in $\lambda$ by Rellich's theorem, there are two possibilities if $\lambda$ is varied into both directions outside the interval $[0,1]$: Either level crossings occur and another ground state takes over in both directions (Fig.~\ref{fig:2-crossings}), or $\Gamma$ remains the ground state while $|\lambda|\to \infty$ into one direction and a level crossing occurs in the other direction (Fig.~\ref{fig:1-crossing}). We will link those two possible situations to the statements (a) and (b) of the theorem in the following.
\begin{figure}[ht]
  \begin{subfigure}{.45\columnwidth}
  \centering
    \begin{tikzpicture}[scale=0.45]
        \draw[->] (0, 0) -- (6.5, 0) node[below] {$\lambda$};
        \draw[->] (0, 0) -- (0, 4) node[left] {$E$};
        \draw[-, MidnightBlue] (0, 2.5) -- (6, 2.5);
        \draw[domain=0:6, smooth, variable=\x, MidnightBlue] plot ({\x}, {-1/3*(\x-3)*(\x-3)+3.25});
        \node[below] at (1.5,0) {$\lambda_1$};
        \node[below] at (4.5,0) {$\lambda_2$};
        \draw[-,dashed] (1.5, 0) -- (1.5, 2.5);
        \draw[-,dashed] (4.5, 0) -- (4.5, 2.5);
        \fill[MidnightBlue] (1.5, 2.5) circle (3pt);
        \fill[MidnightBlue] (4.5, 2.5) circle (3pt);
    \end{tikzpicture}
    \caption{}
    \label{fig:2-crossings}
  \end{subfigure}
  \hfill 
  \begin{subfigure}{.45\columnwidth}
    \centering
    \begin{tikzpicture}[scale=0.45]
        \draw[->] (0, 0) -- (6.5, 0) node[below] {$\lambda$};
        \draw[->] (0, 0) -- (0, 4) node[left] {$E$};
        \draw[-, MidnightBlue] (0, 2) -- (6, 2);
        \draw[-, MidnightBlue] (0, 0.5) -- (6, 3.5);
        \draw[-,dashed] (3, 0) -- (3, 2);
        \fill[MidnightBlue] (3, 2) circle (3pt);
        \draw[->, thick] (2.2, -.5) to (2.9, -.1);
        \node[below,left] at (2.3, -.6) {$\vA$};
    \end{tikzpicture}
    \caption{}
    \label{fig:1-crossing}
  \end{subfigure}
  \caption{The two situations of level crossings in the proof of Theorem~\ref{th:geometry}.}
\end{figure}
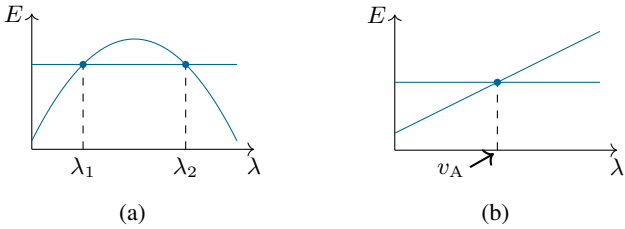
The case that it remains the ground state in both directions is impossible, as the following will show.\\
To show statement (a), we take $\rho\in D(v_\lambda)$, where in this special case $D(v_\lambda)$ can also denote a set containing the single ground-state density for a potential $v_\lambda$ if no degeneracy arises.
If level crossings occur in both directions, say at $\lambda_1$ and $\lambda_2$ with $\lambda_1<\lambda_2$, this implies additional degeneracy at $v_\mathrm{I}=v_{\lambda_1}$ and $v_\mathrm{II}=v_{\lambda_2}$. Any non-uv density $\rho \in D(v_\lambda)$, $\lambda_1<\lambda<\lambda_2$, then belongs to two degeneracy regions $D(v_\mathrm{I})$, $D(v_\mathrm{II})$ of strictly higher degree than $D(v_\lambda)$ because the crossings add at least one to the degree. This means $D(v_\lambda) = D(v_\mathrm{I}) \cap D(v_\mathrm{II})$, so $\rho$ really lies in the intersection of two degeneracy regions.\\
For statement (b) consider the case where $|\lambda|\to\infty$ is possible in one direction without a level crossing. This means at least one component $v_{\lambda,i}$ of $v_\lambda$ diverges.
If $v_{\lambda,i} \to +\infty$ then it must hold that $\rho_i = 0$ for this density to still belong to the ground state. But then performing $v_{\lambda,i} \to -\infty$ means that at one point another eigenstate (and there must always be at least one vertex $i$ with $\rho_i>0$) will take over the role of the ground state. But this must happen at a level crossing, say at $\vA$, where the energies match (because by Rellich's theorem the variation of $v$ leads to a continuously changing energy) and such we arrive at a degeneracy region $D(\vA) \ni \rho$.
If $v_{\lambda,i} \to -\infty$ then since any ground-state density minimizes $F(\rho)+\langle v_\lambda,\rho \rangle$ and $F$ is bounded on $\mathcal{P}_{M,N}$ (as a finite convex function on a compact domain), only a density with $\rho_i=1$ qualifies. Going into the opposite direction $v_{\lambda,i} \to +\infty$, the energy must increase and thus at one point another eigenstate will again take over.
Consequently, if $|\lambda|\to \infty$ is possible in one direction without a level crossing for a non-uv density, this density simultaneously belongs to the boundary of $\mathcal{P}_{M,N}$ (because it has either $\rho_i=0$ or $\rho_i=1$) and to a degeneracy region $D(\vA)$.\\
Conversely, assume a $v$-representable $\rho$ is on the boundary of $\mathcal{P}_{M,N}$, which means it has at least one component $\rho_i=0$ or $\rho_i=1$. In the case $\rho_i=1$, $v_i$ can always decrease from its previous value and will only make the energy smaller, also compared to the energy of other eigenstates that have $\rho_i\leq 1$, thus not influencing the ground state. This gives a ray of potentials that all lead to the same density $\rho$, thus making it non-uv. Yet, by continuously increasing $v_i$ one will find the end point, call it $\vA$ as before, of such a ray when the potential is large enough such that another eigenstate takes over and assumes the role of ground state, exactly at a point of degeneracy.
In the case $\rho_i=0$, the ray just extends into the other direction, since changing $v_i$ arbitrarily will not influence the energy of the original ground state, but continuously decreasing $v_i$ will eventually allow another eigenstate to take over the role of the ground state, again at a point of degeneracy. This proves that a density on the boundary of $\mathcal{P}_{M,N}$ is $v$-representable if and only if it also belongs to a degeneracy region $D(\vA)$.
\end{proof}

It must be noted that the last part of the proof shows that if for a $v$-representable boundary density multiple components $\rho_i$ are either 0 or 1, which is possible at least in principle, the rays combine into a convex set of non-uv potentials. But such a situation was never observed in any of the examples studied in the course of this work, just as the possibility of degeneracy regions touching each other or the boundary in more than a single point has not been verified. While singleton degeneracy regions are not principally ruled out in the theorem, degeneracy regions that just arise from internal degrees of freedom do not play any role in the context of the theorem, because the external potential does not couple to the internal coordinates.

The following corollary, put forward as a conjecture in \citet{DFT-graphs}, is imminent from the fact that degeneracy regions are always closed and that non-uv densities only arise as their intersection or when they touch the boundary. We also include our previous results that the non-uv densities have measure zero in the density domain \cite[Sec.~III.D]{DFT-graphs}, which is visible from the density-potential geometry theorem as well.

\begin{corollary}
The set of non-uv densities is closed and has measure zero in the $(M-1)$-dimensional density domain $\mathcal{P}_{M,N}$.
\end{corollary}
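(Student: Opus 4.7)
Following the remark preceding the corollary, the plan is to derive both assertions from Theorem~\ref{th:geometry} together with the convex-analytic characterization reviewed before Corollary~\ref{cor:non-pure-state-v-rep}: a density $\rho \in \mathcal{P}_{M,N}$ is uv precisely when the subdifferential $-\underline\partial F(\rho)$ contains a single equivalence class of potentials modulo constants. Thus the non-uv set $N$ coincides with the set of non-differentiability of the convex functional $F$ on the $(M-1)$-dimensional affine hull of $\mathcal{P}_{M,N}$.

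For measure zero, I would invoke the classical fact that a proper convex function on a finite-dimensional convex set is Fréchet-differentiable Lebesgue-almost everywhere; this immediately yields $(M-1)$-dimensional measure zero for $N$ on $\mathcal{P}_{M,N}$.

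For closedness, I would argue directly from Theorem~\ref{th:geometry} by showing that the complement is open. Decompose $N = N_\cap \cup N_\partial$, where $N_\cap$ collects the intersections $D(v_\mathrm{I}) \cap D(v_\mathrm{II})$ of distinct degeneracy regions and $N_\partial$ collects the boundary-touching points $D(\vA) \cap \partial\mathcal{P}_{M,N}$. Given $\rho_n \to \rho$ with each $\rho_n \in N$, select for each $n$ two distinct representative potentials $v_n, v_n' \in \R^M/\R\mathbf{1}$ lying on the associated segment or ray. If both sequences stay bounded in the quotient, pass to convergent subsequences and invoke the closedness of the graph of $-\underline\partial F$ (a consequence of Rellich's theorem on the analytic dependence of ground-state eigenvalues and eigenvectors on $v$) to obtain two distinct potentials for $\rho$, so $\rho \in N_\cap$. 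If instead some component of $v_n$ or $v_n'$ diverges, Theorem~\ref{th:geometry} forces $\rho_{n,i}\in\{0,1\}$ for that index, hence $\rho \in \partial\mathcal{P}_{M,N}$; the apex potentials $\vA^{(n)}$ however remain in a bounded region since they are pinned by a level crossing, so a subsequence converges to some $\vA$ and $\rho \in D(\vA) \cap \partial\mathcal{P}_{M,N} = N_\partial$.

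The main obstacle is precisely this divergent case in the closedness argument: one must verify that, along rays to infinity, the corresponding apex potentials stay in a compact set of the quotient so that the boundary-touching structure of Theorem~\ref{th:geometry} can be transferred to the limit, and that no density escapes $N$ through a subtle interplay between the two cases of the theorem. The measure-zero part is then essentially automatic from the subdifferential reformulation and classical convex analysis.
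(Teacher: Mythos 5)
Your measure-zero argument is correct and takes a genuinely different route from the paper, which simply points back to Theorem~\ref{th:geometry} and to the earlier graph paper: on the relative interior of $\mathcal{P}_{M,N}$ every density is ensemble $v$-representable and a non-uv density has at least two subgradients of $F$ modulo constants, so the non-uv set is contained in the non-differentiability set of the finite convex functional $F$ on the $(M-1)$-dimensional affine hull (Lebesgue-null) together with the relative boundary (also null). Strictly speaking you only need this inclusion, not the claimed coincidence, and the coincidence can fail on the boundary; but the conclusion stands and the argument is self-contained.

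The closedness half, however, has a genuine gap, and not only the one you flag. In the bounded branch you extract convergent subsequences $v_n \to v$, $v_n' \to v'$ and use the closed graph of $-\underline\partial F$ to place both limits in $-\underline\partial F(\rho)$; but nothing in your argument prevents $v = v'$ modulo constants even though $v_n \neq v_n'$ for every $n$, in which case you obtain only a single potential for $\rho$ and cannot conclude that $\rho$ is non-uv. This is not a removable technicality by convex analysis alone: the non-differentiability set of a generic convex function is merely an $F_\sigma$ and can fail to be closed, so any proof must exploit the specific structure of Theorem~\ref{th:geometry}. That is precisely what the paper does — it reads the corollary off the compactness of degeneracy regions (continuous images of compact sets) combined with the fact that non-uv densities arise only as intersections of degeneracy regions or as points where a degeneracy region meets the closed boundary of $\mathcal{P}_{M,N}$. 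To repair your sequential argument you would have to work with the level-crossing endpoints $v_{\mathrm{I},n}, v_{\mathrm{II},n}$ of the maximal potential segment (or the apexes $\vA^{(n)}$) and show that they neither merge nor escape to infinity in the quotient by constants; the second of these is exactly the obstacle you acknowledge but do not resolve, and the first you do not address at all. As written, the closedness assertion is therefore not established.
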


Conversely, this shows that almost all densities are uniquely $v$-representable, which means the failure of the Hohenberg--Kohn theorem for lattice systems can be considered less drastic. If a density is non-uv after all then by Theorem~\ref{th:geometry} it must lie at the touching point of two degeneracy regions or where a degeneracy region touches the border of the density domain, a quite peculiar property.
If, on the other hand, we can guarantee the validity of the Hohenberg--Kohn theorem in some setting, then this in turn means that degeneracy regions can never touch each other or the boundary.
Another previous result \cite[Cor.~20]{DFT-graphs} is also a direct consequence of the theorem on density-potential geometry: The set of potentials leading to a non-uv density that is not on the boundary of the density domain is always bounded.

We would like to give examples for the geometrical situations present in Theorem~\ref{th:geometry}, two where a degeneracy region touches the boundary of the density domain $\mathcal{P}_{M,N}$ and another where degeneracy regions touch each other. The first situation was already present in the triangle graph example discussed in Sec.~VI.C of \citet{DFT-graphs} where the degeneracy region for $v=0$ touches the boundary of $\mathcal{P}_{3,2}$ at the three points where the incircle touches an equilateral triangle and correspondingly three rays of non-uv potentials shoot off from $v=0$ to infinity in the potential space. Another example can be found in the tetrahedron graph that features the Roman surface as a degeneracy region for $v=0$ and that has already been discussed twice here. The Roman surface touches the boundary of the octahedron $\mathcal{P}_{4,2}$ exactly at the four points $(1,\tfrac{1}{3},\tfrac{1}{3},\tfrac{1}{3})$ (and permutations) corresponding to the vertices of its tetrahedral symmetry. Those are opposite to the flat face where the cylinder-shaped degeneracy bundles are attached, as can be seen from Figure~\ref{fig:Roman-surf-with-cylinders}. Since those degeneracy bundles correspond to the potential $v=(s,0,0,0)$ (and all permutations), $s>0$, one can intuitively state that the non-uv potential rays corresponding to the touching points are given by $v=(s,0,0,0)$ (and all permutations), $s<0$. On the other hand, the tetrahedron graph example, although it contains a large amount of degeneracy, does not feature any degeneracy regions touching each other.
For this we switch over to two non-interacting spinless particles on a simple square graph, a setting studied before in Sec.~III.C and IV.B of \citet{DFT-graphs}. There it was already noted that non-uv densities occur at the diagonals of the middle plane of the octahedron $\mathcal{P}_{4,2}$ that correspond to line-segments between potentials leading to degeneracy, but no picture of the full degeneracy structure was provided. We show this in Figure~\ref{fig:square-graph-deg-structure} that exhibits a pillow-like structure composed from families of ellipses (degeneracy bundles) that is symmetric around the middle plane of the octahedron $\mathcal{P}_{4,2}$. Those ellipses get flatter as the potential strength increases and they approach the edge of the density domain. For $v=0$, on the other hand, we have $\orho{1}=\orho{2}=\frac{1}{2}(1,1,1,1)$ and $\orho{12}=\frac{1}{2}(1,-1,1,-1)$ as the columns of $P$ which means $(g,\kappa) = (2,1)$ and thus, as it was described in Section~\ref{sec:density-regions}, the degeneracy region collapses into a vertical line segment in the center of $\mathcal{P}_{4,2}$.

\begin{figure*}[ht]
  \begin{subfigure}{.45\textwidth}
    \centering
    \includegraphics[width=\columnwidth]{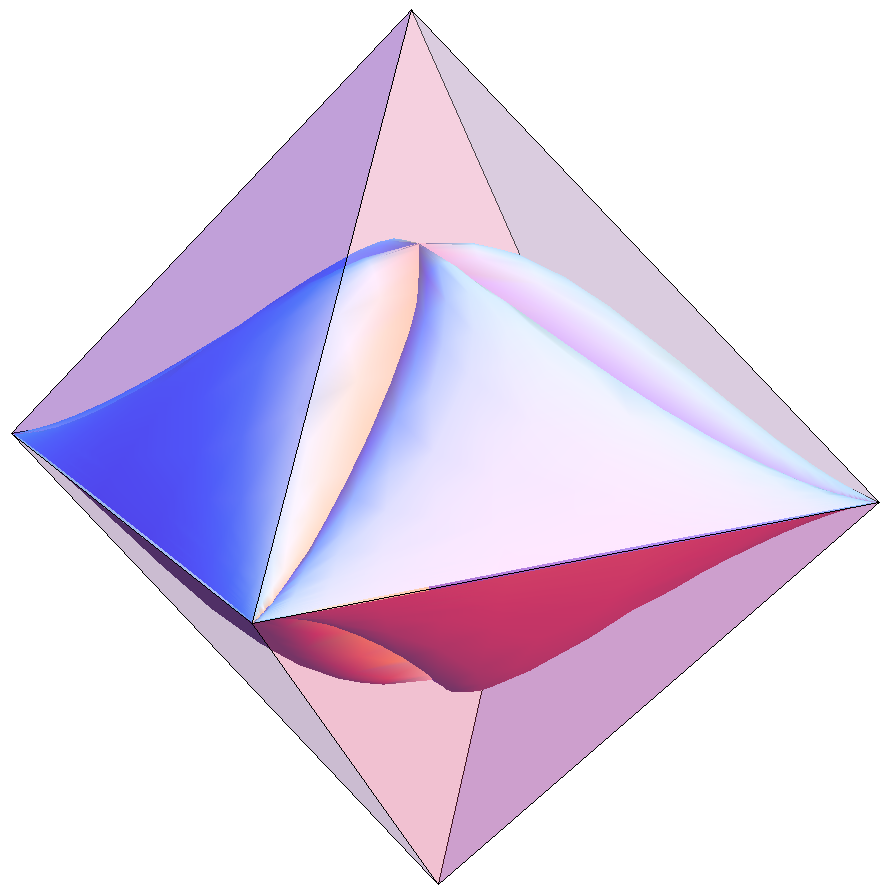}
    \caption{}
  \end{subfigure}
  \hfill 
  \begin{subfigure}{.45\textwidth}
    \centering
    \includegraphics[width=\columnwidth]{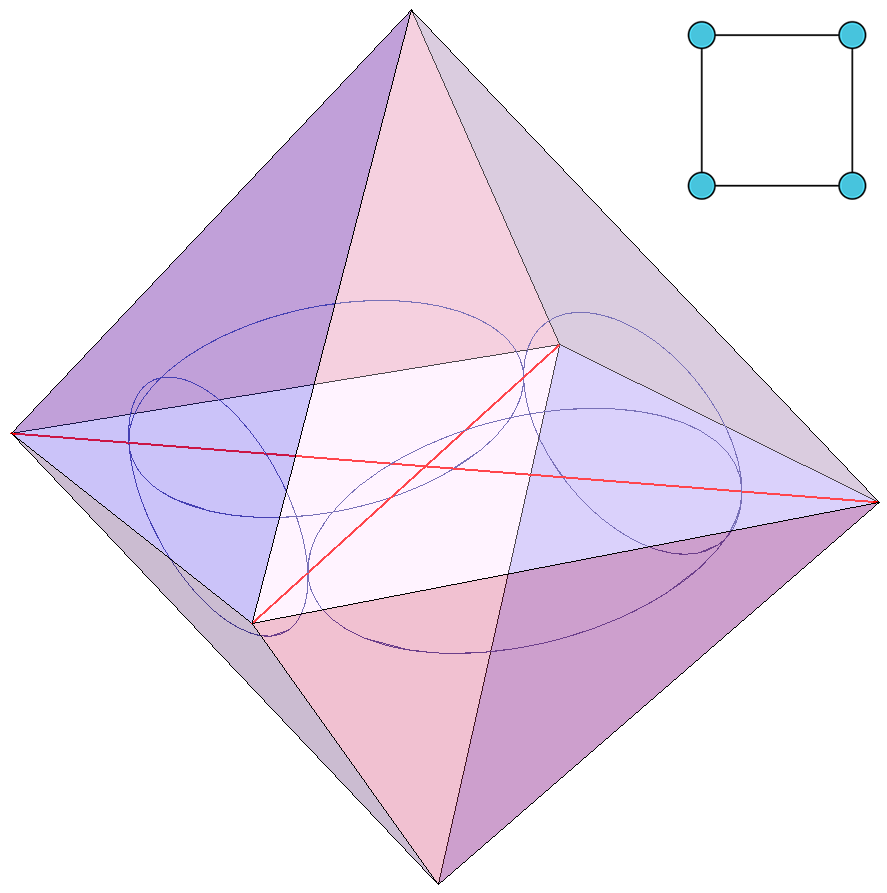}
    \caption{}
  \end{subfigure}
  \caption{Left, the union of all degeneracy regions of the square graph inside the octahedral set of densities $\mathcal{P}_{4,2}$. Right, four elliptical degeneracy regions from it are displayed that mutually touch at the diagonals drawn in red.}
  \label{fig:square-graph-deg-structure}
\end{figure*}

It is interesting to note that the geometrical features discussed here also appear in the universal functional $F(\rho)$. Since all densities $\rho\in D(v)=\overline\partial E(v)$ in a degeneracy region arise from a single potential and $v\in -\underline\partial F(\rho)$ as discussed in Section~\ref{sec:non-pure-v-rep}, this means that $F(\rho)$ is always flat on $D(v)$. Conversely, $F(\rho)$ is locally strictly convex outside of degeneracy regions. When two degeneracy regions touch at a non-uv $\rho$ then $F(\rho)$ displays a kink between the flat regions at that point which is responsible for non-differentiability of $F(\rho)$ and a subdifferential $-\underline\partial F(\rho)$ containing more than a single potential. Such a situation is displayed in Figure~\ref{fig:square-graph-F}, where the example of the square graph is repeated limited to densities on the middle plane of the octahedral density domain $\mathcal{P}_{4,2}$ with touching degeneracy regions along the diagonals.
\begin{figure}[ht]
    \centering
    \includegraphics[width=.9\columnwidth]{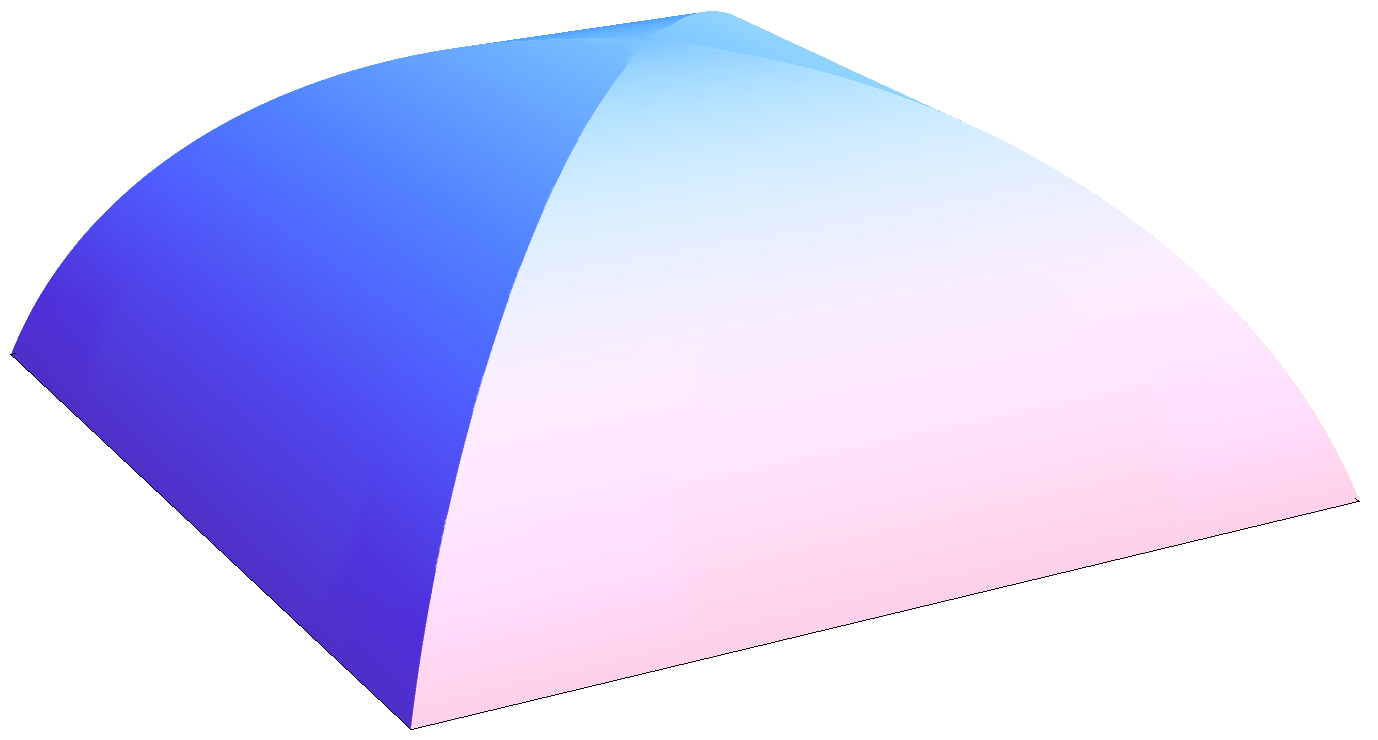}
    \caption{The universal functional $-F(\rho)$ displayed on the middle plane of the density octahedron from Figure~\ref{fig:square-graph-deg-structure}. Where degeneracy regions touch, the functional becomes non-differentiable, while it is constant along the lines that are inside an elliptical degeneracy region. Note that it is differentiable again on the center point.}
    \label{fig:square-graph-F}
\end{figure}

\section{Summary and open questions}
\label{sec:summary}

In summary, we analyzed the geometrical structures arising from degeneracy in the density domain and in potential space. While a state eigenspace is mapped to intricate shapes given by the convex hulls of algebraic varieties (Sections~\ref{sec:subspaces-density-map} and \ref{sec:density-regions}) and their geometric properties tell when a density is not pure-state $v$-representable (Section~\ref{sec:non-pure-v-rep}, those shapes can even form bundles of a certain dimensionality when the potential is varied (Section~\ref{sec:Ullrich-Kohn}). Finally, by just observing where those objects, the degeneracy regions, touch each other or the boundary of the density domain, one finds densities that are represented by multiple potentials (non-uv densities), a situation where the Hohenberg--Kohn theorem from density-functional theory fails. Overall, a beautiful geometrical picture of the relation between potentials and their ground-state densities emerges.

Looking at the different examples that have been discussed here reveals that a substantial part of the whole density domain is filled by the degeneracy structures. We might thus wonder about the precise \emph{degeneracy ratio}, i.e., the volume of the complete degeneracy structure divided by the volume of the hypersimplex $\mathcal{P}_{M,N}$. For the triangle graph and the square graph this can be calculated by elementary methods and we get $0.605$ and $0.589$, respectively. To calculate the volume of the convexified Roman surface is much harder, but we can give a lower and an upper estimate by considering a tetrahedron inside the Roman surface and one enclosing it. The inner tetrahedron shares the four points where the Roman surface touches the boundary of $\mathcal{P}_{3,2}$ as vertices. The outer tetrahedron is taken such that it shares the flat sides of the convexified Roman surface as faces, which means that the cylindrical degeneracy bundles lead exactly to the incircles of the tetrahedron faces. This gives a degeneracy ratio for the tetrahedron-graph system between $0.528$ and $0.703$. Hence, in all those examples that are highly symmetric and thus can give rise to degeneracy more easily, the degeneracy ratio is about 60\%, still an unexpectedly large number. Such a large amount of degeneracy was also observed in the discretization of continuum systems \cite{garrigue2021KS}, so it is not exclusively an effect in small lattice systems.

We finally list a number of open challenges and ideas for further investigations linked to the theory developed in this work:

\begin{itemize}
    \item One obvious open task would be to completely classify all possible shapes of degeneracy regions for general $(g,\kappa)$ following the analysis of \citet{degen1994} and \citet{Coffman-Roman-surf} that settle the case $(g,\kappa)=(3,2)$. Some could already be ruled out by Lemma~\ref{lem:VKerPzero}, but possibly other classes also never appear as degeneracy regions, which could be demonstrated by further investigations related to the nature of $\ker P$. After all, in our examples of all Steiner surfaces only the Roman surface appeared. The case of a singleton degeneracy region with $\kappa=(g+1)g/2-1$ was also never observed in an example and it is open if it can even occur without internal degrees of freedom. If it can be proven that they do not exist then this would save the original argument of \citet{ullrich2002} that degeneracy in potential space is rare.
    
    \item In order to find more classes of degeneracy regions or whole degeneracy structures, one can naturally study other interesting finite-lattice system. For the preparation of this work graph systems apart from the triangle, square, tetrahedron and cuboctahedron with two particles have been studied: the diamond graph shows only accidental degeneracy for a certain symmetrical potential with $(g,\kappa)=(2,1)$, the claw graph just one cylindrical degeneracy bundle, the octahedron graph has a degeneracy region with $(g,\kappa)=(3,3)$ that has the shape of a triangle, and the cube graph leads to bundles of $(g,\kappa)=(3,0)$ regions (this yields another example for non-pure-state $v$-representability by Corollary~\ref{cor:non-pure-state-v-rep}), further cylindrical bundles and a Roman surface in the center. But this list is far from exhaustive.
    
    \item It was already noted in Section~\ref{sec:geom} that cases of whole degeneracy regions as the intersection between two larger degeneracy regions or at the boundary have not yet been observed. The question remains if they are even possible. It also remains open if touching the boundary of $\mathcal{P}_{M,N}$ only occurs at the highest-dimensional faces of the hypersimplex, where all $\rho_i>0$, or also at lower-dimensional edges. In case of the former and since by Theorem~\ref{th:geometry} densities on the border are only $v$-representable if they are contained in a degeneracy region, this means that densities with $\rho_i=0$ can never occur for any potential $v$ in a finite-lattice system or, in other words, all ground-state densities are strictly positive. Some corresponding results for non-interacting or weakly-interacting lattice systems can be found in \citet[Sec.~V.A]{DFT-graphs}.
    
    \item Non-uv densities and especially densities from degeneracy regions, since they are more frequent, can pose a problem to the convergence of the Kohn--Sham iteration by introducing ambiguities \cite{garrigue2021KS}. Imagine the target density of an interacting system that lies within a degeneracy region of the non-interacting reference system (this can indeed happen, since for example a Hamiltonian with a central potential commutes with the Laplace--Runge--Lenz vector, leading to additional symmetries, while the Coulomb interaction breaks this symmetry), then even if we find an almost correct effective potential we would lie outside of the degeneracy region and thus potentially stay far away from the target density.
    
    \item In the end of Section~\ref{sec:geom} we noted that $F(\rho)$ is flat on degeneracy regions, a feature that will not be respected by the usual approximations to the universal functional. More insight into its shape, arising from the discussed geometrical structure, can thus improve such approximations.
    
    \item Moreau--Yosida regularization is applied to the universal functional $F(\rho)$ in order to remedy non-differentiability of the functional \cite{Kvaal2014,laestadius2018generalized} and to help proving convergence of the Kohn--Sham iteration in the finite lattice case \cite{penz2019guaranteed,penz2020erratum}. In the geometrical picture, this leads to degeneracy regions that shrink slightly, thus moving apart from each other, and a boundary of the density domain that is entirely removed. This then avoids any non-uv densities by Theorem~\ref{th:geometry}.
    
    \item Further dual settings apart from ``density-potential'' with even more complicated geometries can be considered as well. For example current DFT maps the tuple $(\rho,\mathbf{j})$ including the current density to $(v,\mathbf{A})$, where the potential is complemented by a vector potential \cite{laestadius2019CDFT}. Since Theorem~\ref{th:HK1} also holds for this case, a corresponding geometrical analysis can be conducted.
It is known that non-uv densities do also occur in current DFT in the continuum case, where they were shown to arise in situations where one also has ground-state degeneracy \cite{LaestadiusTellgren2018}.
    
    \item Theorem~\ref{th:geometry} exhibits an interesting dual structure between the geometric elements in the density and potential spaces introduced here. That the spaces themselves are topological duals of each other was already noted, but we also see that each degeneracy region (as an extended object) is connected to a single density point, while the straight lines connecting those points correspond to the density where the degeneracy regions touch. This structure is very similar to the \emph{nerve complex} of a set family.
    
\end{itemize}

\begin{acknowledgements}
The foundation of this work was laid at a joint research visit in November 2021 at the Oberwolfach Research Institute for Mathematics as a part of their ``Research in Pairs'' program. Hence, our special gratitude goes to the institute for inviting us, to Michael Ruggenthaler who joined us there, and to Kevin Piterman who was a visiting fellow at that time and helped us with some questions in algebraic geometry. 
Coincidentally, the Roman surface was only recognized at first because of a photo showing a sculpture of it in the immense library at Oberwolfach.
R.~v.~L.\ further acknowledges the Academy of Finland for support under project no.~317139.
\end{acknowledgements}

\begin{appendix}
\section{Extension to complex hermitian Hamiltonians}

The general assumption throughout the work above was a real symmetric Hamiltonian $H_v$ so that one can choose \emph{real} basis vectors $\{\Phi_k\}_{k=1}^g$  for any eigenspace. This in turn was important for being able to define \emph{real} matrix elements $\orho{kl} = 2\langle \Phi_k,\hat\rho\Phi_l\rangle$ that enter the projection map $P$ that maps the Veronese variety $\mathbb{V}_g$ onto $\Dens$ (see Section~\ref{sec:subspaces-density-map}). We will here extend this investigation to general complex hermitian Hamiltonians.
The definitions for the density regions of a given subspace $\U$ (\emph{degeneracy regions} if it is an eigenspace of $H_v$),
\begin{subequations}
\begin{align}
    &D_\C = \rho(\U), \\
    &D = \rho(\{ \Gamma \in \mathcal{D} \mid \Gamma\H \subseteq \U \}),
\end{align}
\end{subequations}
stay entirely the same. Only the $D_\R = \rho(\mathcal{U}_\R)$ makes no sense any more for a subspace with complex basis vectors, since it would be basis dependent. But we still have the relation $D=\ch D_\C$ which follows directly from Eq.~\eqref{eq:rho-from-Gamma}.

Let $z=(z_1,\ldots,z_g) \in\C^g$, $\|z\|=1$, be the coordinates for a normalized state $\Psi\in\mathcal{U}$ with respect to the (complex) orthonormal subspace basis $\{\Phi_k\}_{k=1}^g$ of $\mathcal{U}$. Then the density is evaluated as
\begin{equation}\label{eq:rho-from-x-complex}
    \rho(\Psi) = \sum_{k=1}^g |z_k|^2 \rho(\Phi_k) + \sum_{\substack{k,l=1\\k< l}}^g 2\Re ( z_k^*z_l \langle \Phi_k,\hat\rho\Phi_l\rangle ).
\end{equation}
Note that the difference to Eq.~\eqref{eq:rho-from-x} is just that the real-part function is applied also to the inner-product expression. In analogy to before we define
\begin{subequations}
\begin{align}
	&\orho{k} = \rho(\Phi_k), \\
	&\orhoRe{kl} = 2\Re ( \langle \Phi_k,\hat\rho\Phi_l\rangle ), \\
	&\orhoIm{kl} = 2\Im ( \langle \Phi_k,\hat\rho\Phi_l\rangle ).
\end{align}
\end{subequations}
With this we rewrite Eq.~\eqref{eq:rho-from-x-complex} as
\begin{equation}
    \rho(\Psi) = \sum_{k=1}^g |z_k|^2 \orho{k} + \sum_{\substack{k,l=1\\k< l}}^g \left( \Re ( z_k^*z_l ) \orhoRe{kl} - \Im ( z_k^*z_l ) \orhoIm{kl} \right).
\end{equation}
This form suggests a generalized Veronese embedding
\begin{equation}\begin{aligned}
    \nu_\C : \C^g \quad\quad\quad\;\; &\longrightarrow \R^{g^2} \\
    (z_1,\ldots,z_g) &\longmapsto (|z_1|^2,\ldots,|z_g|^2,\\
    &\quad\quad\;\; \Re(z_1^*z_2),\ldots,\Re(z_{g-1}^*z_g),\\
    &\quad\quad\;\; \Im(z_1^*z_2),\ldots,\Im(z_{g-1}^*z_g)).
\end{aligned}\end{equation}
Finally, the map from coordinates $z\in\C^g$ to densities is composed of this $\nu_\C$ and a linear map $P: \R^{g^2} \to \Dens$ with columns $\orho{k}$, $\orhoRe{kl}$, and $\orhoIm{kl}$, just like Eq.~\eqref{eq:P-nu-map},
\begin{equation}
	\rho = P \circ \nu_\C : \C^g \mapsto \Dens.
\end{equation}
Whenever such a column is zero or linearly dependent on the rest, the dimension of the kernel $\kappa = \dim\ker P$ increases and the dimension of the formed density region decreases. So the new dimensional formula for a density region is then
\begin{equation}
	\dim D = g^2 - \kappa - 1.
\end{equation}
Theorem~\ref{th:Ullrich-Kohn} still holds unchanged, with $(g+1)g/2$ replaced by $g^2$, as well as the whole Section~\ref{sec:geom}.

Using a complex Hamiltonian allows for an even simpler realization of the Roman surface as a degeneracy region. For this we take $M=4$ and $N=1$, so just one particle and a 4-dimensional Hilbert space, with the Hamiltonian
\begin{equation}
	h_0 = \begin{pmatrix*}[r]
		0 & \i & \i & \i\\
		-\i & 0 & 1 & 1\\
		-\i & 1 & 0 & 1\\
		-\i & 1 & 1 & 0
	\end{pmatrix*}.
\end{equation}
Unlike the Hamiltonian considered in Section~\ref{sec:density-regions}, the Perron--Frobenius theorem does not apply complex Hamiltonians and we find that the ground state indeed is 3-fold degenerate with an eigenspace spanned by
\begin{align}
	&\Phi_1 = \frac{1}{\sqrt{2}} (\i,0,0,-1),\\
	&\Phi_2 = \frac{1}{\sqrt{6}} (\i,0,-2,1),\\
	&\Phi_3 = \frac{1}{2\sqrt{3}} (\i,-3,1,1).
\end{align}
The degeneracy region from this eigenspace is now again the convex hull of a Roman surface that lies in the density domain $\mathcal{P}_{4,1}$ which has the shape of a tetrahedron. It is displayed in Fig.~\ref{fig:Roman-surf-Tetrahedron}.
\begin{figure}[ht]
    \centering
    \includegraphics[width=.9\columnwidth]{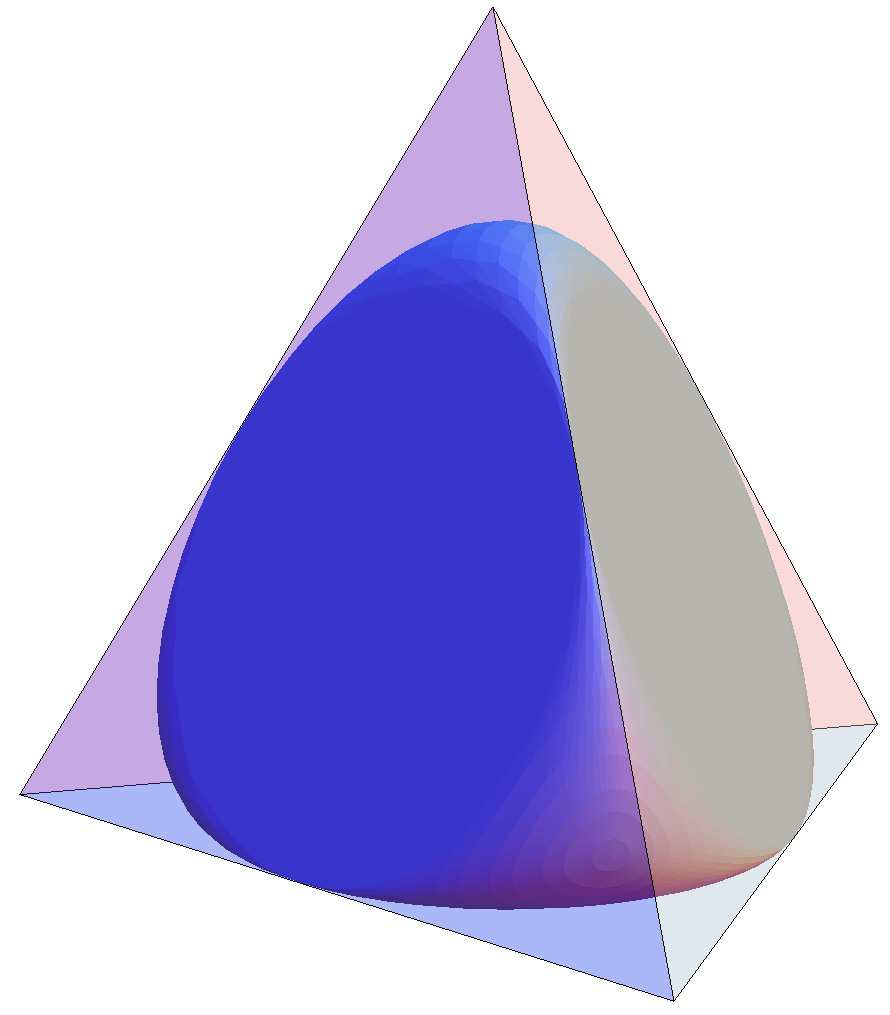}
    \caption{The convex, three-fold degeneracy region $D$ inside the tetrahedronal density domain. The corners of the octahedron correspond to the extreme density $(1,0,0,0)$ and its permutations, generalized barycentric coordinates are used to display densities.}
    \label{fig:Roman-surf-Tetrahedron}
\end{figure}
We note three things on how the degeneracy region relates to the density domain: (i) it almost fills out the whole domain, so the degeneracy ratio is high, (ii) the flat faces of the convex hull of the Roman surface touch the faces of the tetrahedron, and (iii) the degeneracy region also touches every edge of the tetrahedron in exactly one point. The last item gives a counterexample to a question raised in Section~\ref{sec:summary} earlier: if touching the boundary of the density domain only occurs at the highest-dimensional faces. This implies that by Theorem~\ref{th:geometry} this examples features exactly six $v$-representable densities (one in the middle of every edge) that have even two components of $\rho$ equal to zero (the $\Phi_1$ above gives one such example, i.e., $\rho_1=\frac{1}{2}(1,0,0,1)$). Additionally, since also the flat faces touch the density domain, all those densities are $v$-representable as well with one component of $\rho$ equal to zero (for this, the $\Phi_2$ above gives an example, i.e., $\rho_2=\frac{1}{6}(1,0,4,1)$). These are thus all examples of non-uv densities. We conclude that as far as \emph{complex} Hamiltonians are concerned, the ground-state density must not be strictly positive.

\section{Extension to sets of arbitrary observables}

Another form of extension is almost obvious. For an arbitrary quantum system, not limited to finite lattices, take a set of observables $\{\hat o_1,\ldots,\hat o_M\}$, all represented by self-adjoint operators. For any normalized state $\Psi\in\H$ we define
\begin{equation}
	o(\Psi,\Phi) = (\langle\Psi,\hat o_1\Phi\rangle, \ldots, \langle\Psi,\hat o_M\Phi\rangle).
\end{equation}
Then, replacing $\rho(\Psi)$ by $o(\Psi,\Psi)$ and extending this definition to mixed states like in Eq.~\eqref{eq:rho-from-Gamma}, we can have equivalent definitions for density (or degeneracy) regions. The linear map $P:\R^{g^2}\to\R^M$ now consists of $o(\Phi_k,\Phi_k)$, $2\Re(o(\Phi_k,\Phi_l))$, and $2\Im(o(\Phi_k,\Phi_l))$, and the whole dimensional and morphological analyses of Sections~\ref{sec:subspaces-density-map}-\ref{sec:non-pure-v-rep} still hold.
Just note that all the dimensional formulas, starting with Eq.~\eqref{eq:max-dim-D}, change because of the missing normalization constraint.
Also, the result $D(v) = \overline\partial E(v)$ in Section~\ref{sec:non-pure-v-rep} depends on the dual relation to potentials discussed below.
The only possible issue occurs with unbounded operators for observables that could lead to infinities. Conversely, the results of Sections~\ref{sec:Ullrich-Kohn}-\ref{sec:geom} do not readily apply, since they depend intimately on the dual relation between densities and potentials originating from the linear coupling $\sum_i v_i \hat \rho_i$. But if the Hamiltonian includes a coupling of the form $\sum_i v_i \hat o_i$ with a generalized potential $v$, this would again yield the same structure and then everything, including $D(v) = \overline\partial E(v)$, can be derived in the same fashion.

The same construction and most notably the image of the map $\Psi\mapsto o(\Psi,\Psi)$, called the `moduli space of expectation values', was recently studied by \citet{Song2023}. In case the mapping is limited to a subspace this moduli space amounts exactly to our definition of density region. The reference discusses interesting applications to estimates on expectation valued, like Heisenberg's uncertainty principle and Bell's inequality, and even a direct application to DFT.

A typical example for such a Hamiltonian would be a lattice Hamiltonian including spin coupled to a magnetic field $\vec B_i$,
\begin{equation}
	H_B = H_0 + \sum_i \vec B_i \cdot \hat{\vec{\sigma}}_i.
\end{equation}
Another interesting example is one-particle reduced density matrix functional theory (1RDMFT)~\cite{Liebert2023-1RDMFT}, where instead of the density one considers the 1RDM and the coupling is to a non-local potential,
\begin{equation}
	H_v = H_0 + \sum_{i,j} v_{ij} \hat{\gamma}_{ij}.
\end{equation}

\end{appendix}

\newpage


\begin{thebibliography}{48}%
\makeatletter
\providecommand \@ifxundefined [1]{%
 \@ifx{#1\undefined}
}%
\providecommand \@ifnum [1]{%
 \ifnum #1\expandafter \@firstoftwo
 \else \expandafter \@secondoftwo
 \fi
}%
\providecommand \@ifx [1]{%
 \ifx #1\expandafter \@firstoftwo
 \else \expandafter \@secondoftwo
 \fi
}%
\providecommand \natexlab [1]{#1}%
\providecommand \enquote  [1]{``#1''}%
\providecommand \bibnamefont  [1]{#1}%
\providecommand \bibfnamefont [1]{#1}%
\providecommand \citenamefont [1]{#1}%
\providecommand \href@noop [0]{\@secondoftwo}%
\providecommand \href [0]{\begingroup \@sanitize@url \@href}%
\providecommand \@href[1]{\@@startlink{#1}\@@href}%
\providecommand \@@href[1]{\endgroup#1\@@endlink}%
\providecommand \@sanitize@url [0]{\catcode `\\12\catcode `\$12\catcode
  `\&12\catcode `\#12\catcode `\^12\catcode `\_12\catcode `\%12\relax}%
\providecommand \@@startlink[1]{}%
\providecommand \@@endlink[0]{}%
\providecommand \url  [0]{\begingroup\@sanitize@url \@url }%
\providecommand \@url [1]{\endgroup\@href {#1}{\urlprefix }}%
\providecommand \urlprefix  [0]{URL }%
\providecommand \Eprint [0]{\href }%
\providecommand \doibase [0]{http://dx.doi.org/}%
\providecommand \selectlanguage [0]{\@gobble}%
\providecommand \bibinfo  [0]{\@secondoftwo}%
\providecommand \bibfield  [0]{\@secondoftwo}%
\providecommand \translation [1]{[#1]}%
\providecommand \BibitemOpen [0]{}%
\providecommand \bibitemStop [0]{}%
\providecommand \bibitemNoStop [0]{.\EOS\space}%
\providecommand \EOS [0]{\spacefactor3000\relax}%
\providecommand \BibitemShut  [1]{\csname bibitem#1\endcsname}%
\let\auto@bib@innerbib\@empty
\bibitem [{\citenamefont {von Barth}(2004)}]{vonBarth2004basic}%
  \BibitemOpen
  \bibfield  {author} {\bibinfo {author} {\bibfnamefont {U.}~\bibnamefont {von
  Barth}},\ }\bibfield  {title} {\emph {\bibinfo {title} {Basic
  density-functional theory—an overview},\ }}\href {\doibase
  10.1238/Physica.Topical.109a00009} {\bibfield  {journal} {\bibinfo  {journal}
  {Phys. Scr.}\ }\textbf {\bibinfo {volume} {2004}},\ \bibinfo {pages} {9}
  (\bibinfo {year} {2004})}\BibitemShut {NoStop}%
\bibitem [{\citenamefont {Burke}\ and\ \citenamefont
  {friends}(2007)}]{burke2007abc}%
  \BibitemOpen
  \bibfield  {author} {\bibinfo {author} {\bibfnamefont {K.}~\bibnamefont
  {Burke}}\ and\ \bibinfo {author} {\bibnamefont {friends}},\ }\href
  {https://dft.uci.edu/doc/g1.pdf} {\bibinfo {title} {The {ABC} of {DFT}},\ }
  (\bibinfo {year} {2007})\BibitemShut {NoStop}%
\bibitem [{\citenamefont {Dreizler}\ and\ \citenamefont
  {Gross}(2012)}]{dreizler2012-book}%
  \BibitemOpen
  \bibfield  {author} {\bibinfo {author} {\bibfnamefont {R.~M.}\ \bibnamefont
  {Dreizler}}\ and\ \bibinfo {author} {\bibfnamefont {E.~K.}\ \bibnamefont
  {Gross}},\ }\href@noop {} {\emph {\bibinfo {title} {Density functional
  theory: An approach to the quantum many-body problem}}}\ (\bibinfo
  {publisher} {Springer},\ \bibinfo {year} {2012})\BibitemShut {NoStop}%
\bibitem [{\citenamefont {Eschrig}(2003)}]{eschrig2003-book}%
  \BibitemOpen
  \bibfield  {author} {\bibinfo {author} {\bibfnamefont {H.}~\bibnamefont
  {Eschrig}},\ }\href@noop {} {\emph {\bibinfo {title} {The fundamentals of
  density functional theory}}},\ \bibinfo {edition} {2nd}\ ed.\ (\bibinfo
  {publisher} {Springer},\ \bibinfo {year} {2003})\BibitemShut {NoStop}%
\bibitem [{\citenamefont {Ullrich}(2011)}]{ullrich2011-book}%
  \BibitemOpen
  \bibfield  {author} {\bibinfo {author} {\bibfnamefont {C.~A.}\ \bibnamefont
  {Ullrich}},\ }\href@noop {} {\emph {\bibinfo {title} {Time-dependent
  density-functional theory: {C}oncepts and applications}}}\ (\bibinfo
  {publisher} {OUP Oxford},\ \bibinfo {year} {2011})\BibitemShut {NoStop}%
\bibitem [{\citenamefont {Ullrich}\ and\ \citenamefont
  {Yang}(2014)}]{ullrich2014brief}%
  \BibitemOpen
  \bibfield  {author} {\bibinfo {author} {\bibfnamefont {C.~A.}\ \bibnamefont
  {Ullrich}}\ and\ \bibinfo {author} {\bibfnamefont {Z.}~\bibnamefont {Yang}},\
  }\bibfield  {title} {\emph {\bibinfo {title} {A brief compendium of
  time-dependent density functional theory},\ }}\href {\doibase
  10.1007/s13538-013-0141-2} {\bibfield  {journal} {\bibinfo  {journal} {Braz.
  J. Phys.}\ }\textbf {\bibinfo {volume} {44}},\ \bibinfo {pages} {154}
  (\bibinfo {year} {2014})}\BibitemShut {NoStop}%
\bibitem [{\citenamefont {Vignale}\ and\ \citenamefont
  {Rasolt}(1987)}]{Vignale1987}%
  \BibitemOpen
  \bibfield  {author} {\bibinfo {author} {\bibfnamefont {G.}~\bibnamefont
  {Vignale}}\ and\ \bibinfo {author} {\bibfnamefont {M.}~\bibnamefont
  {Rasolt}},\ }\bibfield  {title} {\emph {\bibinfo {title} {Density-functional
  theory in strong magnetic fields},\ }}\href {\doibase
  10.1103/PhysRevLett.59.2360} {\bibfield  {journal} {\bibinfo  {journal}
  {Phys. Rev. Lett.}\ }\textbf {\bibinfo {volume} {59}},\ \bibinfo {pages}
  {2360} (\bibinfo {year} {1987})}\BibitemShut {NoStop}%
\bibitem [{\citenamefont {Vignale}(2004)}]{VIGNALE_PRB70_201102}%
  \BibitemOpen
  \bibfield  {author} {\bibinfo {author} {\bibfnamefont {G.}~\bibnamefont
  {Vignale}},\ }\bibfield  {title} {\emph {\bibinfo {title} {Mapping from
  current densities to vector potentials in time-dependent current density
  functional theory},\ }}\href {\doibase 10.1103/PhysRevB.70.201102} {\bibfield
   {journal} {\bibinfo  {journal} {Phys. Rev. B}\ }\textbf {\bibinfo {volume}
  {70}},\ \bibinfo {pages} {201102} (\bibinfo {year} {2004})}\BibitemShut
  {NoStop}%
\bibitem [{\citenamefont {Ruggenthaler}\ \emph {et~al.}(2014)\citenamefont
  {Ruggenthaler}, \citenamefont {Flick}, \citenamefont {Pellegrini},
  \citenamefont {Appel}, \citenamefont {Tokatly},\ and\ \citenamefont
  {Rubio}}]{ruggenthaler2014-QEDFT}%
  \BibitemOpen
  \bibfield  {author} {\bibinfo {author} {\bibfnamefont {M.}~\bibnamefont
  {Ruggenthaler}}, \bibinfo {author} {\bibfnamefont {J.}~\bibnamefont {Flick}},
  \bibinfo {author} {\bibfnamefont {C.}~\bibnamefont {Pellegrini}}, \bibinfo
  {author} {\bibfnamefont {H.}~\bibnamefont {Appel}}, \bibinfo {author}
  {\bibfnamefont {I.~V.}\ \bibnamefont {Tokatly}}, \ and\ \bibinfo {author}
  {\bibfnamefont {A.}~\bibnamefont {Rubio}},\ }\bibfield  {title} {\emph
  {\bibinfo {title} {Quantum-electrodynamical density-functional theory:
  {B}ridging quantum optics and electronic-structure theory},\ }}\href
  {\doibase 10.1103/PhysRevA.90.012508} {\bibfield  {journal} {\bibinfo
  {journal} {Phys. Rev. A}\ }\textbf {\bibinfo {volume} {90}},\ \bibinfo
  {pages} {012508} (\bibinfo {year} {2014})}\BibitemShut {NoStop}%
\bibitem [{\citenamefont {Ullrich}\ and\ \citenamefont
  {Kohn}(2002)}]{ullrich2002}%
  \BibitemOpen
  \bibfield  {author} {\bibinfo {author} {\bibfnamefont {C.~A.}\ \bibnamefont
  {Ullrich}}\ and\ \bibinfo {author} {\bibfnamefont {W.}~\bibnamefont {Kohn}},\
  }\bibfield  {title} {\emph {\bibinfo {title} {Degeneracy in density
  functional theory: {T}opology in the v and n spaces},\ }}\href {\doibase
  10.1103/PhysRevLett.89.156401} {\bibfield  {journal} {\bibinfo  {journal}
  {Phys. Rev. Lett.}\ }\textbf {\bibinfo {volume} {89}},\ \bibinfo {pages}
  {156401} (\bibinfo {year} {2002})}\BibitemShut {NoStop}%
\bibitem [{\citenamefont {Garrigue}(2021)}]{garrigue2021-potential-density}%
  \BibitemOpen
  \bibfield  {author} {\bibinfo {author} {\bibfnamefont {L.}~\bibnamefont
  {Garrigue}},\ }\bibfield  {title} {\emph {\bibinfo {title} {Some properties
  of the potential-to-ground state map in quantum mechanics},\ }}\href
  {\doibase 10.1007/s00220-021-04140-9.pdf} {\bibfield  {journal} {\bibinfo
  {journal} {Commun. Math. Phys.}\ }\textbf {\bibinfo {volume} {386}},\
  \bibinfo {pages} {1803} (\bibinfo {year} {2021})}\BibitemShut {NoStop}%
\bibitem [{\citenamefont {Arovas}\ \emph {et~al.}(2022)\citenamefont {Arovas},
  \citenamefont {Berg}, \citenamefont {Kivelson},\ and\ \citenamefont
  {Raghu}}]{arovas2022hubbard}%
  \BibitemOpen
  \bibfield  {author} {\bibinfo {author} {\bibfnamefont {D.~P.}\ \bibnamefont
  {Arovas}}, \bibinfo {author} {\bibfnamefont {E.}~\bibnamefont {Berg}},
  \bibinfo {author} {\bibfnamefont {S.~A.}\ \bibnamefont {Kivelson}}, \ and\
  \bibinfo {author} {\bibfnamefont {S.}~\bibnamefont {Raghu}},\ }\bibfield
  {title} {\emph {\bibinfo {title} {The {H}ubbard model},\ }}\href {\doibase
  10.1146/annurev-conmatphys-031620-102024} {\bibfield  {journal} {\bibinfo
  {journal} {Annu. Rev. Condens. Matter Phys.}\ }\textbf {\bibinfo {volume}
  {13}},\ \bibinfo {pages} {239} (\bibinfo {year} {2022})}\BibitemShut
  {NoStop}%
\bibitem [{\citenamefont {Qin}\ \emph {et~al.}(2022)\citenamefont {Qin},
  \citenamefont {Sch{\"a}fer}, \citenamefont {Andergassen}, \citenamefont
  {Corboz},\ and\ \citenamefont {Gull}}]{qin2022hubbard}%
  \BibitemOpen
  \bibfield  {author} {\bibinfo {author} {\bibfnamefont {M.}~\bibnamefont
  {Qin}}, \bibinfo {author} {\bibfnamefont {T.}~\bibnamefont {Sch{\"a}fer}},
  \bibinfo {author} {\bibfnamefont {S.}~\bibnamefont {Andergassen}}, \bibinfo
  {author} {\bibfnamefont {P.}~\bibnamefont {Corboz}}, \ and\ \bibinfo {author}
  {\bibfnamefont {E.}~\bibnamefont {Gull}},\ }\bibfield  {title} {\emph
  {\bibinfo {title} {The {H}ubbard model: A computational perspective},\
  }}\href {\doibase 10.1146/annurev-conmatphys-090921-033948} {\bibfield
  {journal} {\bibinfo  {journal} {Annu. Rev. Condens. Matter Phys.}\ }\textbf
  {\bibinfo {volume} {13}},\ \bibinfo {pages} {275} (\bibinfo {year}
  {2022})}\BibitemShut {NoStop}%
\bibitem [{\citenamefont {Flores}\ \emph {et~al.}(2022)\citenamefont {Flores},
  \citenamefont {Soler-Polo},\ and\ \citenamefont
  {Ortega}}]{flores2022localorbital}%
  \BibitemOpen
  \bibfield  {author} {\bibinfo {author} {\bibfnamefont {F.}~\bibnamefont
  {Flores}}, \bibinfo {author} {\bibfnamefont {D.}~\bibnamefont {Soler-Polo}},
  \ and\ \bibinfo {author} {\bibfnamefont {J.}~\bibnamefont {Ortega}},\
  }\bibfield  {title} {\emph {\bibinfo {title} {A closed local-orbital unified
  description of dft and many-body effects},\ }}\href {\doibase
  10.1088/1361-648X/ac6eae} {\bibfield  {journal} {\bibinfo  {journal} {J.
  Phys. Condens. Matter}\ }\textbf {\bibinfo {volume} {34}},\ \bibinfo {pages}
  {304006} (\bibinfo {year} {2022})}\BibitemShut {NoStop}%
\bibitem [{\citenamefont {Penz}\ and\ \citenamefont {van
  Leeuwen}(2021)}]{DFT-graphs}%
  \BibitemOpen
  \bibfield  {author} {\bibinfo {author} {\bibfnamefont {M.}~\bibnamefont
  {Penz}}\ and\ \bibinfo {author} {\bibfnamefont {R.}~\bibnamefont {van
  Leeuwen}},\ }\bibfield  {title} {\emph {\bibinfo {title} {Density-functional
  theory on graphs},\ }}\href {\doibase 10.1063/5.0074249} {\bibfield
  {journal} {\bibinfo  {journal} {J. Chem. Phys.}\ }\textbf {\bibinfo {volume}
  {155}},\ \bibinfo {pages} {244111} (\bibinfo {year} {2021})}\BibitemShut
  {NoStop}%
\bibitem [{\citenamefont {Lieb}(1983)}]{Lieb1983}%
  \BibitemOpen
  \bibfield  {author} {\bibinfo {author} {\bibfnamefont {E.~H.}\ \bibnamefont
  {Lieb}},\ }\bibfield  {title} {\emph {\bibinfo {title} {Density functionals
  for {C}oulomb-systems},\ }}\href {\doibase 10.1002/qua.560240302} {\bibfield
  {journal} {\bibinfo  {journal} {Int. J. Quantum Chem.}\ }\textbf {\bibinfo
  {volume} {24}},\ \bibinfo {pages} {243} (\bibinfo {year} {1983})}\BibitemShut
  {NoStop}%
\bibitem [{\citenamefont {Tellgren}\ \emph {et~al.}(2018)\citenamefont
  {Tellgren}, \citenamefont {Laestadius}, \citenamefont {Helgaker},
  \citenamefont {Kvaal},\ and\ \citenamefont {Teale}}]{Tellgren2018}%
  \BibitemOpen
  \bibfield  {author} {\bibinfo {author} {\bibfnamefont {E.~I.}\ \bibnamefont
  {Tellgren}}, \bibinfo {author} {\bibfnamefont {A.}~\bibnamefont
  {Laestadius}}, \bibinfo {author} {\bibfnamefont {T.}~\bibnamefont
  {Helgaker}}, \bibinfo {author} {\bibfnamefont {S.}~\bibnamefont {Kvaal}}, \
  and\ \bibinfo {author} {\bibfnamefont {A.~M.}\ \bibnamefont {Teale}},\
  }\bibfield  {title} {\emph {\bibinfo {title} {Uniform magnetic fields in
  density-functional theory},\ }}\href {\doibase 10.1063/1.5007300} {\bibfield
  {journal} {\bibinfo  {journal} {J. Chem. Phys.}\ }\textbf {\bibinfo {volume}
  {148}},\ \bibinfo {pages} {024101} (\bibinfo {year} {2018})}\BibitemShut
  {NoStop}%
\bibitem [{\citenamefont {Penz}\ \emph {et~al.}(2023)\citenamefont {Penz},
  \citenamefont {Tellgren}, \citenamefont {Csirik}, \citenamefont
  {Ruggenthaler},\ and\ \citenamefont {Laestadius}}]{dens-pot-review}%
  \BibitemOpen
  \bibfield  {author} {\bibinfo {author} {\bibfnamefont {M.}~\bibnamefont
  {Penz}}, \bibinfo {author} {\bibfnamefont {E.~I.}\ \bibnamefont {Tellgren}},
  \bibinfo {author} {\bibfnamefont {M.~A.}\ \bibnamefont {Csirik}}, \bibinfo
  {author} {\bibfnamefont {M.}~\bibnamefont {Ruggenthaler}}, \ and\ \bibinfo
  {author} {\bibfnamefont {A.}~\bibnamefont {Laestadius}},\ }\bibfield  {title}
  {\emph {\bibinfo {title} {The structure of the density-potential mapping.
  {P}art {I}: Standard density-functional theory},\ }}\href {\doibase
  10.1021/acsphyschemau.2c00069} {\bibfield  {journal} {\bibinfo  {journal}
  {ACS Phys. Chem. Au}\ }\textbf {\bibinfo {volume} {3}},\ \bibinfo {pages}
  {334–347} (\bibinfo {year} {2023})}\BibitemShut {NoStop}%
\bibitem [{\citenamefont {Lewin}\ \emph {et~al.}(2019)\citenamefont {Lewin},
  \citenamefont {Lieb},\ and\ \citenamefont {Seiringer}}]{LewinFunctionals}%
  \BibitemOpen
  \bibfield  {author} {\bibinfo {author} {\bibfnamefont {M.}~\bibnamefont
  {Lewin}}, \bibinfo {author} {\bibfnamefont {E.~H.}\ \bibnamefont {Lieb}}, \
  and\ \bibinfo {author} {\bibfnamefont {R.}~\bibnamefont {Seiringer}},\
  }\bibfield  {title} {\emph {\bibinfo {title} {Universal functionals in
  density functional theory},\ }}\href {\doibase 10.48550/arXiv.1912.10424}
  {\bibfield  {journal} {\bibinfo  {journal} {arXiv preprint}\ } (\bibinfo
  {year} {2019}),\ 10.48550/arXiv.1912.10424}\BibitemShut {NoStop}%
\bibitem [{\citenamefont {Garrigue}(2018)}]{Garrigue2018}%
  \BibitemOpen
  \bibfield  {author} {\bibinfo {author} {\bibfnamefont {L.}~\bibnamefont
  {Garrigue}},\ }\bibfield  {title} {\emph {\bibinfo {title} {Unique
  continuation for many-body {S}chr{\"o}dinger operators and the
  {H}ohenberg--{K}ohn theorem},\ }}\href {\doibase 10.1007/s11040-018-9287-z}
  {\bibfield  {journal} {\bibinfo  {journal} {Math. Phys. Anal. Geom.}\
  }\textbf {\bibinfo {volume} {21}},\ \bibinfo {pages} {27} (\bibinfo {year}
  {2018})}\BibitemShut {NoStop}%
\bibitem [{\citenamefont {B\'ar\'any}\ and\ \citenamefont
  {Karasev}(2012)}]{BK_2012}%
  \BibitemOpen
  \bibfield  {author} {\bibinfo {author} {\bibfnamefont {I.}~\bibnamefont
  {B\'ar\'any}}\ and\ \bibinfo {author} {\bibfnamefont {R.}~\bibnamefont
  {Karasev}},\ }\bibfield  {title} {\emph {\bibinfo {title} {Notes about the
  {C}arath\'eodory number},\ }}\href {\doibase 10.1007/s00454-012-9439-z}
  {\bibfield  {journal} {\bibinfo  {journal} {Discrete Comput. Geom.}\ }\textbf
  {\bibinfo {volume} {48}},\ \bibinfo {pages} {783} (\bibinfo {year}
  {2012})}\BibitemShut {NoStop}%
\bibitem [{\citenamefont {Beltrametti}\ \emph {et~al.}(2009)\citenamefont
  {Beltrametti}, \citenamefont {Carletti}, \citenamefont {Gallarati},\ and\
  \citenamefont {Monti~Bragadin}}]{beltrametti-book}%
  \BibitemOpen
  \bibfield  {author} {\bibinfo {author} {\bibfnamefont {M.~C.}\ \bibnamefont
  {Beltrametti}}, \bibinfo {author} {\bibfnamefont {E.}~\bibnamefont
  {Carletti}}, \bibinfo {author} {\bibfnamefont {D.}~\bibnamefont {Gallarati}},
  \ and\ \bibinfo {author} {\bibfnamefont {G.}~\bibnamefont {Monti~Bragadin}},\
  }\href@noop {} {\emph {\bibinfo {title} {Lectures on curves, surfaces and
  projective varieties}}}\ (\bibinfo  {publisher} {European Mathematical
  Society},\ \bibinfo {year} {2009})\BibitemShut {NoStop}%
\bibitem [{\citenamefont {Harris}(1992)}]{Harris_book}%
  \BibitemOpen
  \bibfield  {author} {\bibinfo {author} {\bibfnamefont {J.}~\bibnamefont
  {Harris}},\ }\href@noop {} {\emph {\bibinfo {title} {Algebraic geometry: A
  first course}}}\ (\bibinfo  {publisher} {Springer},\ \bibinfo {year}
  {1992})\BibitemShut {NoStop}%
\bibitem [{\citenamefont {Degen}(1994)}]{degen1994}%
  \BibitemOpen
  \bibfield  {author} {\bibinfo {author} {\bibfnamefont {W.~L.~F.}\
  \bibnamefont {Degen}},\ }\bibfield  {title} {\emph {\bibinfo {title} {The
  types of triangular {B}{\'e}zier surfaces},\ }}\href {\doibase
  10.5555/646872.709694} {\bibfield  {journal} {\bibinfo  {journal}
  {Proceedings of the 6th IMA Conference on the Mathematics of Surfaces}\ ,\
  \bibinfo {pages} {153}} (\bibinfo {year} {1994})}\BibitemShut {NoStop}%
\bibitem [{\citenamefont {Garrigue}(2022)}]{garrigue2021KS}%
  \BibitemOpen
  \bibfield  {author} {\bibinfo {author} {\bibfnamefont {L.}~\bibnamefont
  {Garrigue}},\ }\bibfield  {title} {\emph {\bibinfo {title} {Building
  {K}ohn-{S}ham potentials for ground and excited states},\ }}\href {\doibase
  10.1007/s00205-022-01804-1} {\bibfield  {journal} {\bibinfo  {journal} {Arch.
  Rational Mech. Anal.}\ }\textbf {\bibinfo {volume} {245}},\ \bibinfo {pages}
  {949} (\bibinfo {year} {2022})}\BibitemShut {NoStop}%
\bibitem [{\citenamefont {Apéry}(1987)}]{apery-book}%
  \BibitemOpen
  \bibfield  {author} {\bibinfo {author} {\bibfnamefont {F.}~\bibnamefont
  {Apéry}},\ }\href@noop {} {\emph {\bibinfo {title} {Models of the Real
  Projective Plane}}}\ (\bibinfo  {publisher} {Vieweg},\ \bibinfo {year}
  {1987})\BibitemShut {NoStop}%
\bibitem [{\citenamefont {Fortuna}\ \emph {et~al.}(2016)\citenamefont
  {Fortuna}, \citenamefont {Frigerio},\ and\ \citenamefont
  {Pardini}}]{fortuna2016book}%
  \BibitemOpen
  \bibfield  {author} {\bibinfo {author} {\bibfnamefont {E.}~\bibnamefont
  {Fortuna}}, \bibinfo {author} {\bibfnamefont {R.}~\bibnamefont {Frigerio}}, \
  and\ \bibinfo {author} {\bibfnamefont {R.}~\bibnamefont {Pardini}},\
  }\href@noop {} {\emph {\bibinfo {title} {Projective Geometry: Solved Problems
  and Theory Review}}},\ Vol.\ \bibinfo {volume} {104}\ (\bibinfo  {publisher}
  {Springer},\ \bibinfo {year} {2016})\BibitemShut {NoStop}%
\bibitem [{\citenamefont {Sederberg}\ and\ \citenamefont
  {Anderson}(1985)}]{sederberg1985steiner}%
  \BibitemOpen
  \bibfield  {author} {\bibinfo {author} {\bibfnamefont {T.}~\bibnamefont
  {Sederberg}}\ and\ \bibinfo {author} {\bibfnamefont {D.}~\bibnamefont
  {Anderson}},\ }\bibfield  {title} {\emph {\bibinfo {title} {Steiner surface
  patches},\ }}\href {\doibase 10.1109/MCG.1985.276391} {\bibfield  {journal}
  {\bibinfo  {journal} {IEEE Comput. Graph. Appl.}\ }\textbf {\bibinfo {volume}
  {5}},\ \bibinfo {pages} {23} (\bibinfo {year} {1985})}\BibitemShut {NoStop}%
\bibitem [{\citenamefont {Coffman}\ \emph {et~al.}(1996)\citenamefont
  {Coffman}, \citenamefont {Schwartz},\ and\ \citenamefont
  {Stanton}}]{Coffman-Roman-surf}%
  \BibitemOpen
  \bibfield  {author} {\bibinfo {author} {\bibfnamefont {A.}~\bibnamefont
  {Coffman}}, \bibinfo {author} {\bibfnamefont {A.}~\bibnamefont {Schwartz}}, \
  and\ \bibinfo {author} {\bibfnamefont {C.}~\bibnamefont {Stanton}},\
  }\bibfield  {title} {\emph {\bibinfo {title} {The algebra and geometry of
  {S}teiner and other quadratically parametrizable surfaces},\ }}\href
  {\doibase 10.1016/0167-8396(95)00026-7} {\bibfield  {journal} {\bibinfo
  {journal} {Comput. Aided Geom. Des.}\ }\textbf {\bibinfo {volume} {13}},\
  \bibinfo {pages} {257} (\bibinfo {year} {1996})}\BibitemShut {NoStop}%
\bibitem [{\citenamefont {Michel}(1926)}]{Michel-book}%
  \BibitemOpen
  \bibfield  {author} {\bibinfo {author} {\bibfnamefont {C.}~\bibnamefont
  {Michel}},\ }\href@noop {} {\emph {\bibinfo {title} {Compléments de
  géométrie moderne}}}\ (\bibinfo  {publisher} {Vuibert},\ \bibinfo {year}
  {1926})\BibitemShut {NoStop}%
\bibitem [{\citenamefont {Clebsch}(1867)}]{clebsch1867}%
  \BibitemOpen
  \bibfield  {author} {\bibinfo {author} {\bibfnamefont {A.}~\bibnamefont
  {Clebsch}},\ }\bibfield  {title} {\emph {\bibinfo {title} {Ueber die
  {S}teinersche {F}l{\"a}che.}\ }}\href@noop {} {\bibfield  {journal} {\bibinfo
   {journal} {Journal für die reine und angewandte Mathematik}\ }\textbf
  {\bibinfo {volume} {67}},\ \bibinfo {pages} {1} (\bibinfo {year}
  {1867})}\BibitemShut {NoStop}%
\bibitem [{\citenamefont {Cayley}(1873)}]{cayley1873steiner}%
  \BibitemOpen
  \bibfield  {author} {\bibinfo {author} {\bibfnamefont {C.}~\bibnamefont
  {Cayley}},\ }\bibfield  {title} {\emph {\bibinfo {title} {On {S}teiner's
  surface},\ }}\href {\doibase 10.1112/plms/s1-5.1.14} {\bibfield  {journal}
  {\bibinfo  {journal} {Proc. Lond. Math. Soc.}\ }\textbf {\bibinfo {volume}
  {1}},\ \bibinfo {pages} {14} (\bibinfo {year} {1873})}\BibitemShut {NoStop}%
\bibitem [{\citenamefont {Lacour}(1898)}]{lacour1898}%
  \BibitemOpen
  \bibfield  {author} {\bibinfo {author} {\bibfnamefont {E.}~\bibnamefont
  {Lacour}},\ }\bibfield  {title} {\emph {\bibinfo {title} {Sur la surface de
  {S}teiner},\ }}\href@noop {} {\bibfield  {journal} {\bibinfo  {journal}
  {Nouvelles annales de math{\'e}matiques: Journal des candidats aux {\'e}coles
  polytechnique et normale}\ }\textbf {\bibinfo {volume} {17}},\ \bibinfo
  {pages} {437} (\bibinfo {year} {1898})}\BibitemShut {NoStop}%
\bibitem [{\citenamefont {Hilbert}\ and\ \citenamefont
  {Cohn-Vossen}(2021)}]{hilbert-geometry-book}%
  \BibitemOpen
  \bibfield  {author} {\bibinfo {author} {\bibfnamefont {D.}~\bibnamefont
  {Hilbert}}\ and\ \bibinfo {author} {\bibfnamefont {S.}~\bibnamefont
  {Cohn-Vossen}},\ }\href@noop {} {\emph {\bibinfo {title} {Geometry and the
  Imagination}}},\ Vol.~\bibinfo {volume} {87}\ (\bibinfo  {publisher}
  {American Mathematical Society},\ \bibinfo {year} {2021})\BibitemShut
  {NoStop}%
\bibitem [{\citenamefont {Liu}\ \emph {et~al.}(2022)\citenamefont {Liu},
  \citenamefont {Pi}, \citenamefont {Zhou}, \citenamefont {Liu}, \citenamefont
  {Shen}, \citenamefont {Ye}, \citenamefont {Qin}, \citenamefont {Mi},
  \citenamefont {Chen}, \citenamefont {Zhao} \emph
  {et~al.}}]{liu2022physical-Roman-surf}%
  \BibitemOpen
  \bibfield  {author} {\bibinfo {author} {\bibfnamefont {G.}~\bibnamefont
  {Liu}}, \bibinfo {author} {\bibfnamefont {M.}~\bibnamefont {Pi}}, \bibinfo
  {author} {\bibfnamefont {L.}~\bibnamefont {Zhou}}, \bibinfo {author}
  {\bibfnamefont {Z.}~\bibnamefont {Liu}}, \bibinfo {author} {\bibfnamefont
  {X.}~\bibnamefont {Shen}}, \bibinfo {author} {\bibfnamefont {X.}~\bibnamefont
  {Ye}}, \bibinfo {author} {\bibfnamefont {S.}~\bibnamefont {Qin}}, \bibinfo
  {author} {\bibfnamefont {X.}~\bibnamefont {Mi}}, \bibinfo {author}
  {\bibfnamefont {X.}~\bibnamefont {Chen}}, \bibinfo {author} {\bibfnamefont
  {L.}~\bibnamefont {Zhao}},  \emph {et~al.},\ }\bibfield  {title} {\emph
  {\bibinfo {title} {Physical realization of topological {R}oman surface by
  spin-induced ferroelectric polarization in cubic lattice},\ }}\href {\doibase
  10.1038/s41467-022-29764-w} {\bibfield  {journal} {\bibinfo  {journal}
  {Nature Comm.}\ }\textbf {\bibinfo {volume} {13}},\ \bibinfo {pages} {2373}
  (\bibinfo {year} {2022})}\BibitemShut {NoStop}%
\bibitem [{\citenamefont {Barbu}\ and\ \citenamefont
  {Precupanu}(2012)}]{Barbu-Precupanu}%
  \BibitemOpen
  \bibfield  {author} {\bibinfo {author} {\bibfnamefont {V.}~\bibnamefont
  {Barbu}}\ and\ \bibinfo {author} {\bibfnamefont {T.}~\bibnamefont
  {Precupanu}},\ }\href@noop {} {\emph {\bibinfo {title} {Convexity and
  Optimization in Banach Spaces}}},\ \bibinfo {edition} {4th}\ ed.\ (\bibinfo
  {publisher} {Springer},\ \bibinfo {year} {2012})\BibitemShut {NoStop}%
\bibitem [{\citenamefont {Kvaal}\ \emph {et~al.}(2014)\citenamefont {Kvaal},
  \citenamefont {Ekström}, \citenamefont {Teale},\ and\ \citenamefont
  {Helgaker}}]{Kvaal2014}%
  \BibitemOpen
  \bibfield  {author} {\bibinfo {author} {\bibfnamefont {S.}~\bibnamefont
  {Kvaal}}, \bibinfo {author} {\bibfnamefont {U.}~\bibnamefont {Ekström}},
  \bibinfo {author} {\bibfnamefont {A.~M.}\ \bibnamefont {Teale}}, \ and\
  \bibinfo {author} {\bibfnamefont {T.}~\bibnamefont {Helgaker}},\ }\bibfield
  {title} {\emph {\bibinfo {title} {Differentiable but exact formulation of
  density-functional theory},\ }}\href {\doibase 10.1063/1.4867005} {\bibfield
  {journal} {\bibinfo  {journal} {J. Chem. Phys.}\ }\textbf {\bibinfo {volume}
  {140}},\ \bibinfo {pages} {18A518} (\bibinfo {year} {2014})}\BibitemShut
  {NoStop}%
\bibitem [{\citenamefont {Laestadius}\ \emph {et~al.}(2018)\citenamefont
  {Laestadius}, \citenamefont {Penz}, \citenamefont {Tellgren}, \citenamefont
  {Ruggenthaler}, \citenamefont {Kvaal},\ and\ \citenamefont
  {Helgaker}}]{laestadius2018generalized}%
  \BibitemOpen
  \bibfield  {author} {\bibinfo {author} {\bibfnamefont {A.}~\bibnamefont
  {Laestadius}}, \bibinfo {author} {\bibfnamefont {M.}~\bibnamefont {Penz}},
  \bibinfo {author} {\bibfnamefont {E.~I.}\ \bibnamefont {Tellgren}}, \bibinfo
  {author} {\bibfnamefont {M.}~\bibnamefont {Ruggenthaler}}, \bibinfo {author}
  {\bibfnamefont {S.}~\bibnamefont {Kvaal}}, \ and\ \bibinfo {author}
  {\bibfnamefont {T.}~\bibnamefont {Helgaker}},\ }\bibfield  {title} {\emph
  {\bibinfo {title} {Generalized {K}ohn--{S}ham iteration on {B}anach spaces},\
  }}\href {\doibase 10.1063/1.5037790} {\bibfield  {journal} {\bibinfo
  {journal} {J. Chem. Phys.}\ }\textbf {\bibinfo {volume} {149}},\ \bibinfo
  {pages} {164103} (\bibinfo {year} {2018})}\BibitemShut {NoStop}%
\bibitem [{\citenamefont {Levy}(1982)}]{Levy1982}%
  \BibitemOpen
  \bibfield  {author} {\bibinfo {author} {\bibfnamefont {M.}~\bibnamefont
  {Levy}},\ }\bibfield  {title} {\emph {\bibinfo {title} {Electron densities in
  search of {H}amiltonians},\ }}\href {\doibase 10.1103/PhysRevA.26.1200}
  {\bibfield  {journal} {\bibinfo  {journal} {Phys. Rev. A}\ }\textbf {\bibinfo
  {volume} {26}},\ \bibinfo {pages} {1200} (\bibinfo {year}
  {1982})}\BibitemShut {NoStop}%
\bibitem [{\citenamefont {Rellich}(1937)}]{rellich1937}%
  \BibitemOpen
  \bibfield  {author} {\bibinfo {author} {\bibfnamefont {F.}~\bibnamefont
  {Rellich}},\ }\bibfield  {title} {\emph {\bibinfo {title}
  {St{\"o}rungstheorie der {S}pektralzerlegung, {I}. {M}itteilung},\ }}\href
  {\doibase 10.1007/BF01571652} {\bibfield  {journal} {\bibinfo  {journal}
  {Mathematische Annalen}\ }\textbf {\bibinfo {volume} {113}},\ \bibinfo
  {pages} {600} (\bibinfo {year} {1937})}\BibitemShut {NoStop}%
\bibitem [{\citenamefont {Rellich}(1969)}]{rellich-book}%
  \BibitemOpen
  \bibfield  {author} {\bibinfo {author} {\bibfnamefont {F.}~\bibnamefont
  {Rellich}},\ }\href@noop {} {\emph {\bibinfo {title} {Perturbation theory of
  eigenvalue problems}}}\ (\bibinfo  {publisher} {Gordon and Breach Science
  Publishers},\ \bibinfo {year} {1969})\BibitemShut {NoStop}%
\bibitem [{\citenamefont {Kato}(1995)}]{Kato-book}%
  \BibitemOpen
  \bibfield  {author} {\bibinfo {author} {\bibfnamefont {T.}~\bibnamefont
  {Kato}},\ }\href@noop {} {\emph {\bibinfo {title} {Perturbation theory for
  linear operators}}}\ (\bibinfo  {publisher} {Springer},\ \bibinfo {year}
  {1995})\BibitemShut {NoStop}%
\bibitem [{\citenamefont {Penz}\ \emph {et~al.}(2019)\citenamefont {Penz},
  \citenamefont {Laestadius}, \citenamefont {Tellgren},\ and\ \citenamefont
  {Ruggenthaler}}]{penz2019guaranteed}%
  \BibitemOpen
  \bibfield  {author} {\bibinfo {author} {\bibfnamefont {M.}~\bibnamefont
  {Penz}}, \bibinfo {author} {\bibfnamefont {A.}~\bibnamefont {Laestadius}},
  \bibinfo {author} {\bibfnamefont {E.~I.}\ \bibnamefont {Tellgren}}, \ and\
  \bibinfo {author} {\bibfnamefont {M.}~\bibnamefont {Ruggenthaler}},\
  }\bibfield  {title} {\emph {\bibinfo {title} {Guaranteed convergence of a
  regularized {K}ohn--{S}ham iteration in finite dimensions},\ }}\href
  {\doibase 10.1103/physrevlett.123.037401} {\bibfield  {journal} {\bibinfo
  {journal} {Phys. Rev. Lett.}\ }\textbf {\bibinfo {volume} {123}},\ \bibinfo
  {pages} {037401} (\bibinfo {year} {2019})}\BibitemShut {NoStop}%
\bibitem [{\citenamefont {Penz}\ \emph {et~al.}(2020)\citenamefont {Penz},
  \citenamefont {Laestadius}, \citenamefont {Tellgren}, \citenamefont
  {Ruggenthaler},\ and\ \citenamefont {Lammert}}]{penz2020erratum}%
  \BibitemOpen
  \bibfield  {author} {\bibinfo {author} {\bibfnamefont {M.}~\bibnamefont
  {Penz}}, \bibinfo {author} {\bibfnamefont {A.}~\bibnamefont {Laestadius}},
  \bibinfo {author} {\bibfnamefont {E.~I.}\ \bibnamefont {Tellgren}}, \bibinfo
  {author} {\bibfnamefont {M.}~\bibnamefont {Ruggenthaler}}, \ and\ \bibinfo
  {author} {\bibfnamefont {P.~E.}\ \bibnamefont {Lammert}},\ }\bibfield
  {title} {\emph {\bibinfo {title} {Erratum: {G}uaranteed convergence of a
  regularized {K}ohn--{S}ham iteration in finite dimensions},\ }}\href
  {\doibase 10.1103/PhysRevLett.125.249902} {\bibfield  {journal} {\bibinfo
  {journal} {Phys. Rev. Lett.}\ }\textbf {\bibinfo {volume} {125}},\ \bibinfo
  {pages} {249902} (\bibinfo {year} {2020})}\BibitemShut {NoStop}%
\bibitem [{\citenamefont {Laestadius}\ \emph {et~al.}(2019)\citenamefont
  {Laestadius}, \citenamefont {Tellgren}, \citenamefont {Penz}, \citenamefont
  {Ruggenthaler}, \citenamefont {Kvaal},\ and\ \citenamefont
  {Helgaker}}]{laestadius2019CDFT}%
  \BibitemOpen
  \bibfield  {author} {\bibinfo {author} {\bibfnamefont {A.}~\bibnamefont
  {Laestadius}}, \bibinfo {author} {\bibfnamefont {E.~I.}\ \bibnamefont
  {Tellgren}}, \bibinfo {author} {\bibfnamefont {M.}~\bibnamefont {Penz}},
  \bibinfo {author} {\bibfnamefont {M.}~\bibnamefont {Ruggenthaler}}, \bibinfo
  {author} {\bibfnamefont {S.}~\bibnamefont {Kvaal}}, \ and\ \bibinfo {author}
  {\bibfnamefont {T.}~\bibnamefont {Helgaker}},\ }\bibfield  {title} {\emph
  {\bibinfo {title} {{K}ohn--{S}ham theory with paramagnetic currents:
  {C}ompatibility and functional differentiability},\ }}\href {\doibase
  10.1021/acs.jctc.9b00141} {\bibfield  {journal} {\bibinfo  {journal} {J.
  Chem. Theory Comput.}\ }\textbf {\bibinfo {volume} {15}},\ \bibinfo {pages}
  {4003} (\bibinfo {year} {2019})}\BibitemShut {NoStop}%
\bibitem [{\citenamefont {Laestadius}\ and\ \citenamefont
  {Tellgren}(2018)}]{LaestadiusTellgren2018}%
  \BibitemOpen
  \bibfield  {author} {\bibinfo {author} {\bibfnamefont {A.}~\bibnamefont
  {Laestadius}}\ and\ \bibinfo {author} {\bibfnamefont {E.~I.}\ \bibnamefont
  {Tellgren}},\ }\bibfield  {title} {\emph {\bibinfo {title}
  {Density--wave-function mapping in degenerate current-density-functional
  theory},\ }}\href {\doibase 10.1103/PhysRevA.97.022514} {\bibfield  {journal}
  {\bibinfo  {journal} {Phys. Rev. A}\ }\textbf {\bibinfo {volume} {97}},\
  \bibinfo {pages} {022514} (\bibinfo {year} {2018})}\BibitemShut {NoStop}%
\bibitem [{\citenamefont {Song}(2023)}]{Song2023}%
  \BibitemOpen
  \bibfield  {author} {\bibinfo {author} {\bibfnamefont {C.}~\bibnamefont
  {Song}},\ }\bibfield  {title} {\emph {\bibinfo {title} {Quantum geometry of
  expectation values},\ }}\href {\doibase 10.1103/physreva.107.062207}
  {\bibfield  {journal} {\bibinfo  {journal} {Phys. Rev. A}\ }\textbf {\bibinfo
  {volume} {107}} (\bibinfo {year} {2023}),\
  10.1103/physreva.107.062207}\BibitemShut {NoStop}%
\bibitem [{\citenamefont {Liebert}\ \emph {et~al.}(2023)\citenamefont
  {Liebert}, \citenamefont {Chaou},\ and\ \citenamefont
  {Schilling}}]{Liebert2023-1RDMFT}%
  \BibitemOpen
  \bibfield  {author} {\bibinfo {author} {\bibfnamefont {J.}~\bibnamefont
  {Liebert}}, \bibinfo {author} {\bibfnamefont {A.~Y.}\ \bibnamefont {Chaou}},
  \ and\ \bibinfo {author} {\bibfnamefont {C.}~\bibnamefont {Schilling}},\
  }\bibfield  {title} {\emph {\bibinfo {title} {Refining and relating
  fundamentals of functional theory},\ }}\href {\doibase 10.1063/5.0143657}
  {\bibfield  {journal} {\bibinfo  {journal} {J. Chem. Phys.}\ }\textbf
  {\bibinfo {volume} {158}} (\bibinfo {year} {2023}),\
  10.1063/5.0143657}\BibitemShut {NoStop}%
\end{thebibliography}
\end{document}